\newcommand{\versie}{2010/jul/30}
\newcommand{\kortetitel}{Pacer cell}
\newcommand{\korteauteur}{BBEGH}
\title{Pacer cell response to periodic Zeitgebers}
\author{D.G.M. Beersma, H.W. Broer, K. Efstathiou, K.A. Gargar and I. Hoveijn\\University of Groningen\\Department of Chronobiology, PO Box 14, 9750 AA Haren\\Johann Bernoulli Institute for Mathematics and Computer Science, PO Box 407\\The Netherlands}
\date{\versie}
\newcommand{\abc}{
\renewcommand{\theenumi}{\alph{enumi}}
\renewcommand{\labelenumi}{\theenumi)}
\itemsep 0pt}
\newcommand{\iii}{
\renewcommand{\theenumi}{\roman{enumi}}
\renewcommand{\labelenumi}{\theenumi)}
\itemsep 0pt}
\theoremstyle{plain}
\newtheorem{theorem}{Theorem}
\newtheorem{proposition}[theorem]{Proposition}
\newtheorem{lemma}[theorem]{Lemma}
\newtheorem{corollary}[theorem]{Corollary}
\newtheorem{definition}[theorem]{Definition}
\newtheorem{Remark}{Remark}
\newenvironment{remark}{\begin{Remark} \begin{rm}}{\hfill$\blacktriangleright$\end{rm} \end{Remark}}
\newcommand{\commentaar}[1]{{}}
\newcommand{\opmerking}[1]{\texttt{(#1)}}
\newcommand{\avg}[1]{<\! #1 \!>}
\newcommand{\cO}{\mathcal{O}}
\newcommand{\cT}{\mathcal{T}}
\newcommand{\cinf}{C^{\infty}}
\newcommand{\eps}{\varepsilon}
\newcommand{\fR}{\mathbb{R}}
\newcommand{\zgbr}{Z}
\newcommand{\fmu}{f_{\mu}}
\newcommand{\Fmu}{F_{\mu}}
\begin{document}
\maketitle


\commentaar{
- Arnold map versus Arnold family
+ 'influence' naar 'stimulus' overal
}

\begin{abstract}\noindent
Almost all organisms show some kind of time periodicity in their behavior. Especially in mammals the neurons of the suprachiasmatic nucleus form a biological clock regulating the activity-inactivity cycle of the animal. This clock is stimulated by the natural 24-hour light-dark cycle. In our model of this system we consider each neuron as a so called phase oscillator, coupled to other neurons for which the light-dark cycle is a Zeitgeber. To simplify the model we first take an externally stimulated single phase oscillator. The first part of the phase interval is called the active state and the remaining part is the inactive state. Without external stimulus the oscillator oscillates with its intrinsic period. An external stimulus, be it from activity of neighboring cells or the periodic daylight cycle, acts twofold, it may delay the change form active to inactive and it may advance the return to the active state. The amount of delay and advance depends on the strength of the stimulus. We use a circle map as a mathematical model for this system. This map depends on several parameters, among which the intrinsic period and phase delay and advance. In parameter space we find Arnol'd tongues where the system is in resonance with the Zeitgeber. Thus already in this simplified system we find entrainment and synchronization. Also some other phenomena from biological experiments and observations can be related to the dynamical behavior of the circle map.
\end{abstract}

\begin{center}
\textbf{Keywords:} circadian clock, phase oscillator, zeitgeber, synchronization, circle map, resonance tongue
\end{center}

\section{Introduction}\label{sec:intro}
\commentaar{
- ritmiek, circadische klok, Zeitgeber
- collectie van pacer cellen, 2 fenomenen, collectief gedrag en reactie op Zeitgeber in de vorm van 'entrainment' of 'synchronization'.
- enkele pacer cell vertoont op lager niveau ook 'entr' en 'sync', globale beschrijving model, lit refs 'L. Glass' en meer van dat soort, rol van de Zeitgeber is op dit niveau tweeledig, keuzen voor Zeitgeber en waarom
- wat vinden we en wat is de betekenis daarvan
}

\subsection*{Setting of the problem}\label{sec:introset}
Rhythmic behavior is present in almost all organisms. Their rhythms can be autonomous, but are more often externally stimulated. One such stimulus is the 24-hour natural light-dark cycle which governs the activity-inactivity cycle of many animals and plants. The latter is the most common \emph{Zeitgeber} or periodic stimulus, although an alternating high-low temperature cycle is another example of a Zeitgeber. Shaped by several millions of years of evolution under this 24-hour light-dark cycle \cite{j05}, many organisms still exhibit these behavioral rhythms (often with slightly different period) even in conditions without information on the alternation of light and darkness. How the almost-24-hour \emph{intrinsic period} of the internal rhythm synchronizes with or entrains to an external Zeitgeber is one of the major questions in circadian biology, see \cite{asc81,da01}.

In mammals, the circadian clock resides in the suprachiasmatic nucleus, a neuronal hypothalamic tissue residing just above the optic chiasm. It consists of about 10\,000 interconnected neurons or pacer cells \cite{hh,me,moo,rfdm}. Experimental evidence seems to support the model of the suprachiasmatic nucleus as a collection of so called phase oscillators, which was put forward in 1980 to explain circadian rhythms \cite{ert}. Each oscillator has its own intrinsic period of activity-inactivity \cite{wlmr} and interactions among them are believed to synchronize their oscillations \cite{hsknh,hnsh,shkoh}. This model is used to explain the observation that the activity-inactivity cycle of an organism closely follows the period of a Zeitgeber like the 24-hour light-dark cycle.

However, before trying to analyze a model for the collection of pacer cells, we consider a model for a single cell. A collection of interacting pacer cells may mathematically be modelled by a (large) number of coupled almost identical oscillators. Such a model is rather involved, therefore we first make the following simplifying assumption. In our model each oscillator experiences the external (averaged) forcing by the collection of oscillators but does not influence the dynamics of the others. Thus the coupling we consider is asymmetric. This approach is also widely used in astro-dynamics \commentaar{refs!}. The basis for our model of a single pacer cell is given in \cite{bbhd, ert} where the state of a pacer cell is determined by the phase in its activity-inactivity cycle. The stimulus of an external Zeitgeber may advance or delay the phase depending on the phase itself but also on parameters characterizing the pacer cell. The parameters essentially consist of the intrinsic period of the pacer cell, the intrinsic length of the activity interval and the strength of the interaction with an external stimulus. The latter can be the relative number of active pacer cells in the environment as well as a stimulus originating from an external light, temperature, etc level. In our model the Zeitgeber will be a periodic, quasi-periodic or even more general function of time. But in this article we restrict to a periodic Zeitgeber where the period is an average or prevailing period in the natural light and dark cycle or in the collective behavior of pacer cells in the environment.

Models based on phase oscillators date back to at least 1967, see \cite{win67}, and have been studied by several others \cite{blmch,ghrkh,okz}. For an overview see \cite{gla}.

\subsection*{Main questions}\label{sec:intromq}
Thus we study a model of a single pacer cell, stimulated by its environment but not contributing to the collective behavior. For such a situation the main questions we wish to address are
\begin{enumerate}\itemsep 0pt\parsep 0pt
\item Can a single pacer cell synchronize with or entrain to a periodic Zeitgeber?
\item If so, how does this depend on properties of the pacer cell?
\end{enumerate}
The mathematical model we use to describe a pacer cell stimulated by a periodic Zeitgeber, is a dynamical system, more specifically a map on the circle. This map depends on the parameters characterizing the pacer cell. Typical dynamics of a circle map most relevant in view of the questions above is dynamics of fixed points and dynamics of periodic points. Fixed points correspond to \emph{entrainment} of the pacer cell which means that the sequence of onset times of the activity interval has the same period as the Zeitgeber. Periodic points correspond to \emph{synchronization} meaning that there are an integer number of $p$ different onset times of the activity interval during another integer number of $q$ periods of the Zeitgeber. Such points are called $p:q$ periodic points. In this vocabulary fixed points are $1:1$ periodic points, in other words entrainment is a special kind of synchronization. Then the questions for the single pacer cell are translated into the following questions for the mathematical model.
\begin{enumerate}\itemsep 0pt\parsep 0pt
\item Do stable fixed or periodic points exist for the map on the circle?
\item If so, for which domain in parameter space?
\end{enumerate}
%

\subsection*{Summary of results}\label{sec:introres}
The analysis of the model for a single pacer cell shows that both entrainment and synchronization are possible. For weak interaction with the Zeitgeber the parameter space is divided in regions with unique and distinct dynamics. When we fix all parameters except the intrinsic period of the cell and the strength of the interaction we get the so called Arnol'd tongues, see figure \ref{fig:arnoldtongues}. These are labelled by $p$ and $q$ such that for parameter values in the $p:q$ tongue, the dynamics of the map is $p:q$-periodic corresponding to synchronization of the pacer cell. This shows that for a fixed strength of interaction with the Zeitgeber there are \emph{ranges of synchronization}. Which means that for only a finite interval of values of the intrinsic period synchronization is possible. A phenomenon which is confirmed by biological experiments and observations, see \cite{asc65,asc81,ap78,gla,hod,mbrmgr,pd4,pd76,shkoh,str,uto,wev66,wev73}. As a remark we note that varying the intrinsic period at a fixed period of the Zeitgeber is equivalent to varying the period of the Zeitgeber while the intrinsic period is fixed. The largest range is the range of entrainment since the $1:1$ or main tongue is the largest. The location of the range of entrainment depends on the parameters governing phase advance and delay. In one extreme case only pacer cells with an intrinsic period smaller than the period of the Zeitgeber can be entrained, in the other extreme case only pacer cells with a larger intrinsic period can be entrained. Furthermore the onset time of the activity interval is an increasing function of the intrinsic period see section \ref{sec:matmodanl}, which has been observed in many organisms, see \cite{ap78,asc65,rm01,wev73}.

\section{Biological model}\label{sec:biomod}
In our biological model of pacer cells, the state of each pacer cell is determined by a single variable: a phase $\theta \in [0,\tau]$. This phase is the resultant of many biochemical processes. In an isolated pacer cell the phase increases in time with speed one until it reaches a value $\tau$, then it jumps to zero and starts to increase again. If the phase is between zero and a value $\alpha < \tau$ the cell is active and between $\alpha$ and $\tau$, the cell is inactive. Thus an isolated pacer cell shows a periodic activity-inactivity cycle with period $\tau$. We call $\alpha$ the length of the \emph{intrinsic activity interval} and $\tau$ is called \emph{intrinsic period} of the pacer cell. Both the values of $\alpha$ and $\tau$ are properties of the individual cell.

In a collection of pacer cells, considered as an organ without external stimulus, the interaction among the cells is modelled as follows. Let $t_n$ be the time at which the phase is zero. Then the cell is active during the time interval $t_n$ to $t_n + \alpha$. At the end of this interval an isolated cell would become inactive but now the cell remains active until a later time namely $t_n + \alpha + \eps \zgbr$, where $\zgbr$ is the fraction of active cells and $\eps$ is a small cell dependent parameter. We incorporate this \emph{phase delay} into the model by once delaying the phase by an amount of $\eps \zgbr$ at time $t_n + \alpha$, so instantly $\theta$ becomes $\theta-\eps \zgbr$. Similarly the inactivity period of the cell is shortened by activity of other cells. If at time $t_{n+1}$ the phase of the cell is below $\tau$ by an amount of $\eta \zgbr$ so that $\theta + \eta \zgbr = \tau$ the cell becomes active again. Thus there is a \emph{phase advance} of $\theta$ by an amount of $\eta \zgbr$ at time $t_{n+1}$, that is instantly $\theta$ becomes $\theta+\eta \zgbr$. Since the latter is equal to $\tau$, by previous assumptions the phase jumps to zero.

Before isolating a single pacer cell we note that in a collection of $n$ pacer cells the phase of cell $i$ is $\theta_i$. The quantity $\zgbr$ for cell $i$ depends on all phases $\theta_k$ except $\theta_i$ and possibly also on time $t$, so $\zgbr=\zgbr_i(\theta_1,\ldots,\hat{\theta}_i,\ldots,\theta_n,t)$, where the hat means $\theta_i$ excluded. However for reasons to be explained below, we will later on consider $\zgbr$ as a function of time only.

In a model with a large number of interacting pacer cells we assume that the interaction is uni-directional. That is the fraction of active cells stimulates a single cell, but the influence of a single cell on the collection is negligible. Therefore the quantity $\zgbr$ can be considered as only time dependent $\zgbr=\zgbr(t)$. Now our final model consists of a single pacer cell and we study its response to the external stimulus $\zgbr$. The pacer cell is characterized by four parameters $\eps$, $\eta$, $\alpha$ and $\tau$. We do not make a distinction whether $\zgbr$ is related to the fraction of active cells as a function of time or a daily light and dark signal. In our model $\zgbr$ is the \emph{Zeitgeber} for the cell and we will discuss several choices for it.

In principle the Zeitgeber $\zgbr$ can be any function of time. Here we will take it periodic, where the period is for example an average of the observed periods of the collection of pacer cells or an average period of an external stimulus like the daily light and dark cycle. But we may also take a much longer period to model seasonal effects. Since pacer cells may have different values of $\eps$, $\eta$, $\alpha$ and $\tau$ we are especially interested in the domain in parameter space where synchronization occurs.

\section{Mathematical model}\label{sec:matmod}
The system described in section \ref{sec:biomod} has a state which evolves in time. This calls for a \emph{dynamical systems} approach, as a general reference see \cite{bt}. In order to do so we have to identify a \emph{state space} and an \emph{evolution law}. This means that we should find one or more quantities describing the state of the system and then find an evolution law that uniquely defines a future state once an initial state is given. Here we first restrict to \emph{deterministic dynamical systems} that is we do not include noise or some other kind of random fluctuations. It will turn out that we can define a discrete dynamical system on the circle. Using this model we are able to answer some of the questions in section \ref{sec:intro} for a single pacer cell with a Zeitgeber.

\textbf{An isolated pacer cell.} The model we build is based on the phase $\theta$ of the cell. First we consider the cell as a `free running' oscillator, depending on the two parameters $\alpha$ and $\tau$. Later on we also include a periodic Zeitgeber with two additional parameters $\eps$ and $\eta$ governing the strength of the forcing.

Let $\theta :\fR \to [0,\tau)$ be the phase of the cell given by
\begin{equation*}
\theta(t) = 
\begin{cases}
t-t_n, & \text{for}\; t \in [t_n,t_{n+1})\\
0,     & \text{for}\; t = t_{n+1},
\end{cases}
\end{equation*}
where $\tau$ is the maximal value of $\theta$. When $\theta$ reaches the value $\alpha$, with $0 < \alpha < \tau$, the cell undergoes the transition from active to inactive. Suppose at $t = t_n$ the phase $\theta$ is zero. The phase increases with speed $1$ until it reaches the maximal value $\tau$, then $\theta$ instantly becomes zero again. This happens at time $t=t_{n+1} = t_n + \tau$. Thus the period of the `free running' oscillator is $\tau$. The transition times are $t_n = t_0 + n\,\tau$ and $t_n + \alpha$ where the cell state changes from inactive to active and from active to inactive respectively, see figure \ref{fig:theta}.

\begin{figure}[htbp]
\setlength{\unitlength}{1mm}
\begin{picture}(100,60)(0,-5)
\put(5,  5){\includegraphics{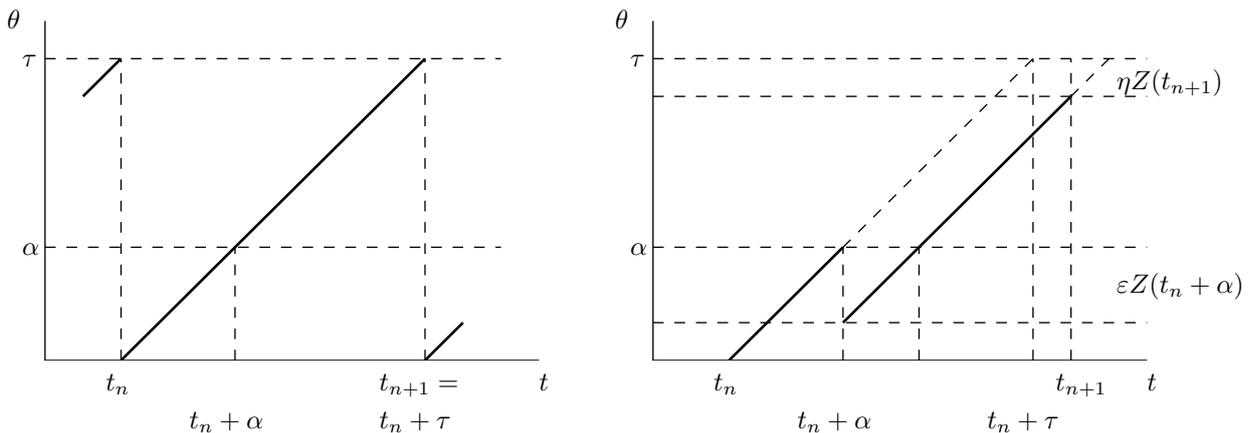}}
\put(13,  1){$t_n$}
\put(24, -4){$t_n + \alpha$}
\put(49,  1){$t_{n+1}=$}
\put(49, -4){$ t_n + \tau$}
\put(70, 1){$t$}
\put(2,  19){$\alpha$}
\put(2,  44){$\tau$}
\put(0,  49){$\theta$}
\put(93,  1){$t_n$}
\put(104,-4){$t_n + \alpha$}
\put(129,-4){$t_n + \tau$}
\put(138, 1){$t_{n+1}$}
\put(150, 1){$t$}
\put(82, 19){$\alpha$}
\put(82, 44){$\tau$}
\put(80, 49){$\theta$}
\put(146,41){$\eta \zgbr(t_{n+1})$}
\put(146,14){$\eps \zgbr(t_n+\alpha)$}
\end{picture}
\caption{\textit{The phase $\theta$ as a function of $t$ without (left) and with (right) Zeitgeber. Suppose the phase $\theta$ is 0 at time $t_n$, then the state of the cell is active for $0 \leq \theta < \alpha$ and inactive for $\alpha \leq \theta < \tau$. Recall that $\alpha+\rho=\tau$. Without Zeitgeber (left figure) this corresponds to the time intervals $[t_n,t_n+\alpha)$ and $[t_n+\alpha,t_n+\tau)$. In presence of a Zeitgeber the cell is active in the time interval $[t_n,t_n+\alpha+\eps \zgbr(t_n+\alpha))$ and inactive in the time interval $[t_n+\alpha+\eps \zgbr(t_n+\alpha),t_{n+1})$, now $t_{n+1}$ is defined implicitly, see text.}\label{fig:theta}}
\end{figure}

We will use the transition times $t_n$ in phase space $\fR$ to define a dynamical system whose evolution takes $t_n$ into $t_{n+1}$. Without Zeitgeber we have a sequence $t_n =t_0 + n \tau$ as described above depending on the parameters $\alpha$ (although trivially) and $\tau$. Next we include a Zeitgeber.

\textbf{A single pacer cell with a periodic Zeitgeber.} To model a non-isolated pacer cell in a collection of other pacer cells with or without an external stimulus, we consider a single pacer cell with a Zeitgeber $\zgbr$ which is a function of time only. First we define the Zeitgeber $\zgbr$. Though non-essential it is convenient to scale time so that the period of the Zeitgeber becomes one.\commentaar{mogelijk hier meer benadrukken dat we periode van Zeitgeber als eenheid van tijd nemen}
\begin{definition}\label{def:extsig}
The positive function $\zgbr : \fR \to [0,1]$ satisfies the following

\begin{enumerate}\iii
\item $\zgbr$ is differentiable,
\item $\zgbr$ is periodic with period $1$.
\end{enumerate}
\end{definition}


In the presence of a Zeitgeber the phase $\theta$ again increases with speed $1$, starting at $\theta=0$ at $t=t_n$, but when $\theta$ reaches the value $\alpha$ it instantly drops back by an amount of $\eps \zgbr(t_n+\alpha)$. Then it again increases with speed $1$ until it reaches a value at $t=t_{n+1}$ such that $\theta(t_{n+1}) + \eta \zgbr(t_{n+1}) = \tau$. Note that $t_{n+1}$ is implicitly defined. Thus the phase $\theta$ of the cell is given by
\begin{equation}\label{eq:theta}
\theta(t) =
\begin{cases}
t-t_n, & \text{for}\; t \in [t_n,t_n+\alpha)\\
t-t_n - \eps \zgbr(t_n+\alpha) & \text{for}\; t \in [t_n+\alpha, t_{n+1})\\
0,                         & \text{for}\; t = t_{n+1}.
\end{cases}
\end{equation}
Apart from the parameters $\alpha$ and $\tau$ we now also have $\eps$ and $\eta$. The latter two give the `strength' of the Zeitgeber. In order that the model be consistent we impose the following conditions on the parameters
\begin{equation*}
0 < \alpha < \tau,\;\; \eps \geq 0,\;\; \eta \geq 0,\;\; \alpha - \eps > 0,\;\; \alpha - \eps + \eta < \tau,
\end{equation*}
so that $\theta$ remains between $0$ and $\tau$.

\textbf{Dynamical system.} The state of the dynamical system we define is the transition time $t_n$ rather than the phase $\theta$. Indeed solving equation
\begin{equation}\label{eq:t}
\theta(t) + \eta \zgbr(t) = \tau \;\;\text{or equivalently}\;\;t-t_n - \eps \zgbr(t_n+\alpha) + \eta \zgbr(t) = \tau
\end{equation}
for $t$, under conditions to be specified later, yields a unique solution $t=t_{n+1}$ once $t_n$ is given. Thus we may write $t_{n+1} = \Fmu(t_n)$ for a map $\Fmu : \fR \to \fR$ depending on parameters $\mu=(\eps,\eta,\alpha,\tau)$. However, it turns out that $\Fmu$ has the property $\Fmu(t+1)=\Fmu(t)+1$ so that $\Fmu$ is the \emph{lift} of a circle map $\fmu : S^1 \to S^1$. This means that we now have a dynamical system with \emph{phase space} the circle $S^1$ and \emph{evolution law} $\fmu$. Conceptually it is easier to work with the circle map $\fmu$ but for actual computations we usually prefer the lift $\Fmu$. Let us summarize the result in the following proposition, for a proof see the appendix.

\begin{proposition}[Circle map and lift]\label{pro:ds}
Let $\zgbr$ be as in definition \ref{def:extsig} and define the function $U_{\eps} :\fR \to \fR$ as $U_{\eps}(t) = t + \eps \zgbr(t)$. Then the map $\Fmu : \fR \to \fR$ with
\begin{equation}\label{eq:dslift}
\Fmu(t) = U_{\eta}^{-1}(U_{\eps}(t+\alpha) - \alpha + \tau)
\end{equation}
defines a parameter dependent differentiable dynamical system, provided that $\eta$ is small enough. Parameters are $\mu = (\eps, \eta, \alpha, \tau)$. Furthermore, $\Fmu$ is the lift of a circle map $\fmu : S^1 \to S^1$ of degree one, given by
\begin{equation}\label{eq:dscmap}
\fmu(t) = U_{\eta}^{-1}(U_{\eps}(t+\alpha) - \alpha + \tau) \mod 1.
\end{equation}
\end{proposition}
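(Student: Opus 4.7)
The plan is to derive the closed-form expression for $\Fmu$ directly from the implicit equation \eqref{eq:t}, then address invertibility, differentiability, and the lift property in turn.

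First I would rewrite equation \eqref{eq:t} by grouping $t$-terms on one side: $t + \eta \zgbr(t) = t_n + \alpha + \eps \zgbr(t_n+\alpha) - \alpha + \tau$, which is exactly $U_{\eta}(t) = U_{\eps}(t_n+\alpha) - \alpha + \tau$. So once $U_{\eta}$ is invertible, solving for $t$ gives $t_{n+1} = \Fmu(t_n)$ with $\Fmu$ as in \eqref{eq:dslift}. This identifies the formula as the unique solution of the evolution rule, not merely one possibility.

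Next I would handle invertibility of $U_{\eta}$. Since $U_{\eta}'(t) = 1 + \eta \zgbr'(t)$ and $\zgbr'$ is continuous and periodic (hence bounded), choosing $\eta < 1/\sup_{t} |\zgbr'(t)|$ makes $U_{\eta}'>0$ everywhere, so $U_{\eta}$ is a strictly increasing $C^1$ diffeomorphism of $\fR$. This is precisely the ``$\eta$ small enough'' hypothesis. The inverse function theorem then gives $U_{\eta}^{-1}\in C^1$, and since $\zgbr$ and hence $U_{\eps}$ are differentiable, $\Fmu$ is differentiable as a composition of differentiable maps.

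Finally I would verify the lift property. Periodicity of $\zgbr$ with period $1$ gives $U_{\eps}(t+1) = U_{\eps}(t)+1$ and similarly $U_{\eta}(t+1) = U_{\eta}(t)+1$; inverting the latter yields $U_{\eta}^{-1}(s+1) = U_{\eta}^{-1}(s)+1$. Combining these,
\begin{equation*}
\Fmu(t+1) = U_{\eta}^{-1}\bigl(U_{\eps}(t+\alpha) + 1 - \alpha + \tau\bigr) = U_{\eta}^{-1}\bigl(U_{\eps}(t+\alpha) - \alpha + \tau\bigr) + 1 = \Fmu(t) + 1,
\end{equation*}
so $\Fmu$ commutes with the unit translation. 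This is exactly the defining property of the lift of a circle map $\fmu:S^1\to S^1$ of degree one, and defining $\fmu$ by \eqref{eq:dscmap} is then consistent.

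The only real obstacle is the invertibility threshold for $\eta$; everything else is routine once that is secured. I would make the bound $\eta \sup |\zgbr'| < 1$ explicit, since it ties the abstract ``small enough'' hypothesis to a quantitative condition on the Zeitgeber.
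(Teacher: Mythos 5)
Your proof is correct and follows essentially the same route as the paper: rewrite the implicit equation \eqref{eq:t} as $U_{\eta}(t_{n+1}) = U_{\eps}(t_n+\alpha)-\alpha+\tau$, invert $U_{\eta}$ for small $\eta$, and use $1$-periodicity of $\zgbr$ to obtain $\Fmu(t+1)=\Fmu(t)+1$. You are somewhat more explicit than the paper, in particular by spelling out the quantitative bound $\eta \sup_t |\zgbr'(t)| < 1$ and the step $U_{\eta}^{-1}(s+1)=U_{\eta}^{-1}(s)+1$, whereas the paper leaves these implicit and introduces a translation-conjugation operator $\cT_{\alpha}$ that plays only a notational role.
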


A circle map is a one-dimensional map just like an interval map. The dynamics of the two have much in common which is most prominent when the circle map is studied by a lift. However, because the circle is different from the interval\commentaar{global topology of the circle is different from that of the interval}, there are also differences in dynamical behavior. For example non-degenerate fixed points of a circle map come in pairs. See figure \ref{fig:maplift}.

\begin{figure}[htbp]
\setlength{\unitlength}{1mm}
\begin{picture}(100,50)(0,0)
\put(40,  5){\includegraphics{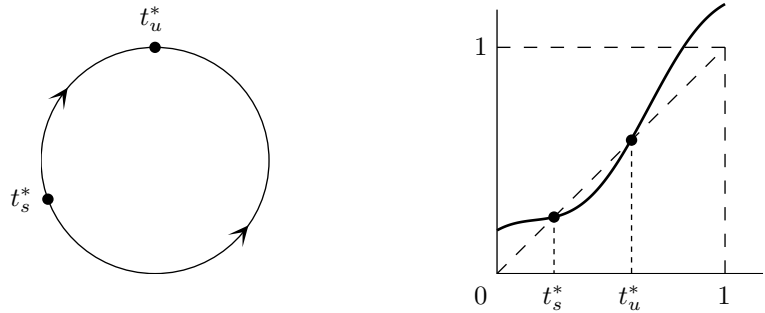}}
\put(97,  1){$0$}
\put(129, 1){$1$}
\put(97, 34){$1$}
\put(106, 1){$t_s^*$}
\put(116, 1){$t_u^*$}
\put(36, 14){$t_s^*$}
\put(53, 38){$t_u^*$}
\end{picture}
\caption{\textit{Phase portrait of circle map $\fmu$ and graph of lift $\Fmu$. $\fmu$ has two fixed points indicated by bullets. One is stable, the other is unstable according to the arrows. The lift $\Fmu$ is a map on the interval $[0,1]$ with $\Fmu(1) = \Fmu(0)+1$, drawn is $\Fmu-1$. $t_s^*$ (stable) and $t_u^*$ (unstable) satisfy $\Fmu(t)=t+1$. The Zeitgeber in this example is $\zgbr(t)=\frac{1}{2}(1+\sin(2\pi t))$.}\label{fig:maplift}}
\end{figure}

We now make a further distinction between two cases, namely whether $\fmu$ is invertible (a diffeomorphism) or not (an endomorphism). The difference between these cases is not only in dynamical behavior but the second case has far richer bifurcation scenarios. Essentially it boils down to both $\eps$ and $\eta$ being 'sufficiently small' or one of them not meeting this criterion. However, there is a priori no reason to assume that either of them is small. \commentaar{See figure \ref{fig:cases} for an illustration.} \opmerking{nog meer over te zeggen?}
\commentaar{treat diffeo and endo on equal footing??}
\begin{enumerate}\abc
\item \label{itm:diffeo}\textbf{$\Fmu$ is the lift of a circle diffeomorphism.} In this case $\Fmu$ is differentiable and $F^{-1}_{\mu}$ exists and is also differentiable. Both $U_{\eta}$ and $U_{\eps}$ have to be invertible, which in turn means that $\eps$ and $\eta$ must be small enough.
\item \label{itm:endo}\textbf{$\Fmu$ is the lift of a circle endomorphism.} In this case $\Fmu$ is again differentiable but $F^{-1}_{\mu}$ does not necessarily exist. Now only $U_{\eta}$ has to be invertible, which means that only $\eta$ must be small enough.
\end{enumerate}
\commentaar{
%
%
}
\begin{remark}
Here $\eps$ 'small enough' means that $U_{\eps}$ is invertible for which we need that $U_{\eps}' > 0$. This depends on the specific form of the Zeitgeber. For the \emph{standard Zeitgeber} $\zgbr(t) = \frac{1}{2}(1+\sin(2 \pi t))$, small enough means $\eps < \frac{1}{\pi}$. Thus the circle map $\fmu$ is a diffeomorphism if both $\eps$ and $\eta$ are smaller than $\frac{1}{\pi}$.
\end{remark}
\begin{remark} If $\eta$ is not small, but $\eps$ is small enough so that $U_{\eps}$ is invertible, $F^{-1}_{\mu}$ is the lift of a circle endomorphism, but $\Fmu$ is multi-valued. This case is similar to case b with the roles of $\Fmu$ and $F^{-1}_{\mu}$ interchanged. However in the dynamical system defined with $F^{-1}_{\mu}$ time is running backwards so it describes the past rather than the future. Mathematically this is not a problem, but the biological interpretation could be problematic. One could force $\Fmu$ to be single-valued by choosing the smallest solution for $t=t_{n+1}$ of equation \eqref{eq:t}. Then $\Fmu$ again defines a dynamical system, though a \emph{discontinuous} one.
\end{remark}
\begin{remark} If both $\eps$ and $\eta$ are not small $\Fmu$ and $F^{-1}_{\mu}$ are multi-valued. In this case we do not have a well-defined dynamical system at all. But a similar construction as in the previous remark can be applied to define a possibly discontinuous dynamical system.
\end{remark}

\section{Analysis of the mathematical model}\label{sec:matmodanl}
Here we restrict ourselves to the case that $\Fmu$ in proposition \ref{pro:ds} is the lift of a circle diffeomorphism $\fmu$, case \ref{itm:diffeo}) on page \pageref{itm:diffeo}. In that case we can use the \emph{rotation number} which tells how much on average an initial point is rotated along the circle by $\fmu$. It is a powerful tool in determining whether $\fmu$ has fixed points or periodic points. The existence of such points and their dependence on the parameters $\mu$ is the main topic of this section. Stable fixed or periodic points are the most relevant for our model and we will see how they lose stability at certain bifurcations. For background on circle maps and further references to the literature see \cite{arn83,bt,dev,kh}.

\subsection{Special cases related to the Arnol'd map}\label{sec:specases}
We begin with an example, namely two special cases where $\Fmu$ can be related to the lift of the Arnol'd or standard circle map $A_{\omega,\lambda}$. Here we make a special choice for the Zeitgeber $\zgbr$, namely $\zgbr(t) = \frac{1}{2}(1+\sin(2\pi t))$. The rotation number of the Arnol'd map, depending on the parameters $\omega$ and $\lambda$ is well studied, so we have quite some information on fixed and periodic points. The (lift of the) Arnol'd map is defined as
\begin{equation*}
A_{\omega,\lambda}(t) = t + \omega + \lambda \sin(2\pi t),
\end{equation*}
for $t \in \fR$ and parameters $\lambda, \omega \in \fR$. In figure \ref{fig:tongharen} we indicate the regions in the $(\omega,\lambda)$-plane where the rotation number of $A_{\omega,\lambda}$ is constant. These regions are called \emph{tongues} and the general theory of circle diffeomorphisms tells us that for parameter values inside the tongues the map has stable fixed or periodic points. On the tongue boundaries we have saddle-node bifurcations of fixed points in the main tongue or periodic points in the other tongues, see figure \ref{fig:tongharen}.

Recall that the lift of our circle map $\fmu$ is given by $\Fmu(t) = U^{-1}_{\eta}(U_{\eps}(t+\alpha)-\alpha+\tau)$, where $U_{\eps}(t) = t + \eps \zgbr(t)$ and $\zgbr$ is the periodic forcing. For a special choice of $\zgbr$ and parameters $\mu$, $\Fmu$ transforms into the Arnol'd map by a change of coordinates.\commentaar{also see remark \ref{rem:finv} on page \pageref{rem:finv}} We summarize this in the following lemma.
\begin{lemma}[Conjugation to Arnol'd map]\label{lem:ftoa}
Let the periodic Zeitgeber $\zgbr$ be given by $\zgbr(t) = \frac{1}{2}(1+\sin(2\pi t))$, then
\begin{enumerate}\itemsep 0pt
\item the map $F_{(\eps,0,\alpha,\tau)}$ is conjugate to the Arnol'd map $A_{\tau+\frac{1}{2} \eps, \frac{1}{2} \eps}$, where the conjugation is a rigid translation over $\alpha$,
\item the map $F^{-1}_{(0,\eta,\alpha,\tau)}$ is equal to the Arnol'd map $A_{-\tau+\frac{1}{2} \eta, \frac{1}{2} \eta}$.
\end{enumerate}
\end{lemma}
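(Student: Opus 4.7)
The plan is to unwind the definition of $\Fmu$ in proposition \ref{pro:ds} in the two special parameter slices and compare to the explicit formula $A_{\omega,\lambda}(t) = t + \omega + \lambda\sin(2\pi t)$ for the Arnol'd map. Throughout I substitute $\zgbr(t)=\tfrac12(1+\sin(2\pi t))$, so that $U_\eps(t) = t + \tfrac{\eps}{2} + \tfrac{\eps}{2}\sin(2\pi t)$ and similarly for $U_\eta$.

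For part (a), I set $\eta = 0$, which makes $U_\eta = \id$ the identity, and the nested formula \eqref{eq:dslift} collapses to $F_{(\eps,0,\alpha,\tau)}(t) = U_\eps(t+\alpha) - \alpha + \tau$. A direct substitution gives
\begin{equation*}
F_{(\eps,0,\alpha,\tau)}(t) \;=\; t + \tau + \tfrac{\eps}{2} + \tfrac{\eps}{2}\sin\bigl(2\pi(t+\alpha)\bigr),
\end{equation*}
which is the Arnol'd map with frequency $\tau + \tfrac12\eps$ and amplitude $\tfrac12\eps$, but evaluated at the translated argument $t+\alpha$. To remove this phase shift I conjugate by the rigid translation $h(t) = t-\alpha$; since $\Fmu$ is an additive perturbation of the identity and the translation commutes with the linear part, the conjugation $h^{-1}\circ F_{(\eps,0,\alpha,\tau)}\circ h$ only shifts the argument of the sine by $-\alpha$, producing precisely $A_{\tau+\frac{1}{2}\eps,\frac{1}{2}\eps}$.

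For part (b), I set $\eps = 0$, so $U_\eps = \id$ and \eqref{eq:dslift} becomes $F_{(0,\eta,\alpha,\tau)}(t) = U_\eta^{-1}(t+\tau)$. Rather than invert $U_\eta$ explicitly, I invert $\Fmu$ by observing that $s = F^{-1}(t)$ satisfies $U_\eta(t) = s + \tau$, i.e.\ $s = t + \eta\zgbr(t) - \tau$. Substituting the standard Zeitgeber yields
\begin{equation*}
F^{-1}_{(0,\eta,\alpha,\tau)}(t) \;=\; t + \bigl(-\tau + \tfrac{\eta}{2}\bigr) + \tfrac{\eta}{2}\sin(2\pi t),
\end{equation*}
which is literally $A_{-\tau+\frac{1}{2}\eta,\frac{1}{2}\eta}(t)$, with no translation needed because $\alpha$ enters \eqref{eq:dslift} only through the combination $t+\alpha$ inside $U_\eps = \id$, and hence drops out.

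There is essentially no obstacle; the only thing to verify is that the computations take place inside the range where $\Fmu$ is well-defined, which under the standing hypothesis of case \ref{itm:diffeo}) holds as soon as $\eps,\eta < 1/\pi$ so that $U_\eps$ and $U_\eta$ are genuine diffeomorphisms. The mild subtlety worth flagging is that in (a) the conjugating map is $h(t)=t-\alpha$ on $\fR$; since it is a translation by an integer's worth plus a fractional shift, it descends to a rotation of $S^1$, so the statement about conjugation transfers from the lifts to the circle maps themselves.
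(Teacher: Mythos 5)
Your proof is correct, and it is precisely the ``straightforward'' computation the paper alludes to when it omits the proof (``The proof of the lemma is straightforward and therefore omitted''). Both parts are verified: in (a) you correctly observe that $\eta=0$ collapses $\Fmu$ to $U_\eps(t+\alpha)-\alpha+\tau = t + \tau + \tfrac{\eps}{2} + \tfrac{\eps}{2}\sin(2\pi(t+\alpha))$, and that conjugating by a rigid translation removes the $\alpha$-shift inside the sine while leaving the affine part untouched; in (b) the key move is that inverting $F_{(0,\eta,\alpha,\tau)}(t)=U_\eta^{-1}(t+\tau)$ does not require inverting $U_\eta$ explicitly, since $F^{-1}=U_\eta(\cdot)-\tau$ is already in closed form. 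The only cosmetic remark: you call the conjugating map $h(t)=t-\alpha$ while the lemma speaks of ``a rigid translation over $\alpha$''; since you actually compute $h^{-1}\circ F\circ h = T_\alpha\circ F\circ T_{-\alpha}$, this is the same conjugacy and the wording difference is immaterial. Your closing observation that the computation descends from lifts on $\fR$ to circle maps is a nice touch that the paper takes for granted.
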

Using $F^{-1}$ in the second part of the lemma may seem unnatural in the present context, but if $\rho$ is the rotation number of $F^{-1}$, then $1-\rho$ is the rotation number of $F$. Therefore the second part yields information about the rotation number of $F_{(0,\eta,\alpha,\tau)}$.\commentaar{Also compare with remark \ref{rem:finv} at the end of section \ref{sec:matmod}} Note that the parameter transformation from parameters of $F$ to those of $A$ does not involve $\alpha$. Thus $\alpha$ does not play a role in the bifurcation analysis of these two special cases. The proof of the lemma is straightforward and therefore omitted.
\begin{figure}[htbp]
\setlength{\unitlength}{1mm}
\begin{picture}(100,45)(0,0)
\put(5,   5){\includegraphics{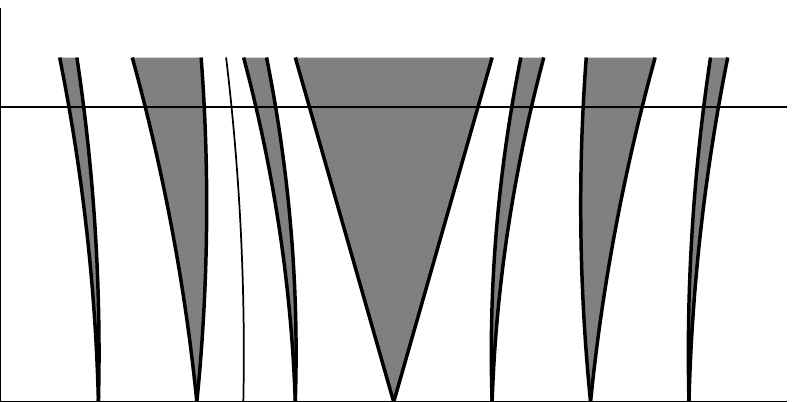}}
\put(95,  5){\includegraphics{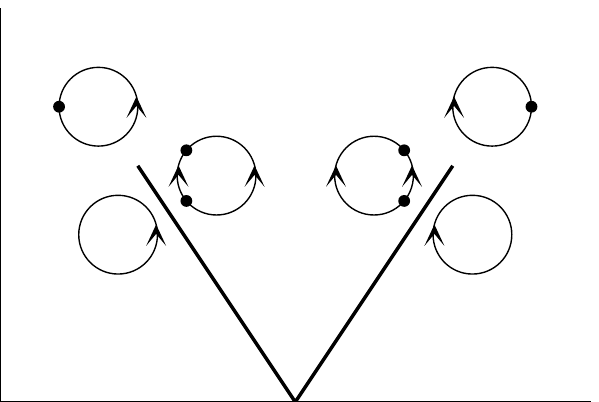}}
\put(80,  2){$\omega$}
\put(0,  42){$\lambda$}
\put(0,  34){$\lambda_0$}
\put(150, 2){$\omega$}
\put(91, 42){$\lambda$}
\put(44,  1){1}
\put(124, 1){1}
\put(12,  1){$p/q$}
\put(28,  1){$\omega_0$}
\end{picture}
\caption{\textit{Left: schematic picture of tongues and hairs. Main tongue emanating at 1, a $p:q$ tongue at $p/q$ and a hair at $\omega_0$, on the $\omega$-axis for the Arnol'd map $A_{\omega,\lambda}$. Right: schematic phase portraits of the same family near and on the boundary of the main tongue. The rotation number as a function of $\omega$ for a fixed value $\lambda=\lambda_0$ is shown in figure \ref{fig:dutrap}.}\label{fig:tongharen}}
\end{figure}
\commentaar{meer haren tekenen}

To analyze these two special cases it suffices to consider the Arnol'd map $A_{\omega,\lambda}$. Let us summarize the properties of the latter. If $\lambda=0$ the map reduces to a rigid rotation $A_{\omega,0} = R_{\omega}$ and therefore the rotation number is $\rho(A_{\omega,0}) = \rho(R_{\omega}) = \omega$. This is a degenerate situation. But for $\lambda \neq 0$ the map is no longer degenerate. Let us first fix $\omega=\omega_0$ and $\lambda=\lambda_0 \neq 0$ such that $A_{\omega_0,\lambda_0}$ has a rational rotation number $\frac{p}{q}$. Then $A_{\omega_0,\lambda_0}$ has $q$-periodic points. If these points are hyperbolic then there is an open neighborhood of $(\omega_0,\lambda_0)$ in the parameter plane such that for all $(\omega,\lambda)$ in this neighborhood, $A_{\omega,\lambda}$ has rotation number $\frac{p}{q}$.

Let us now consider the line $L_{\lambda_0}=\{(\omega,\lambda_0) \;|\; \lambda_0 \neq 0, \omega \in [\frac{1}{2},\frac{3}{2}) \}$ in the parameter plane of the Arnol'd circle map. From the arguments above it follows that this line segment contains open intervals on which the rotation number of the map equals $\frac{p}{q}$. These intervals have to shrink to points when $\lambda_0$ tends to zero, because the rotation number of $A_{\omega,0}$ equals $\omega$. When we consider the rotation number on the line segment $L_{\lambda_0}$ as a function of $\omega$, its graph, see figure \ref{fig:dutrap}, is a so called \emph{devil's staircase}, see \cite{dev, kh} for a definition. A further analysis shows that there are saddle-node bifurcations of $q$-periodic points on the boundary of the intervals in $L_{\lambda_0}$, see \cite{kuz}. Since the map $A_{\omega,\lambda}$ depends differentiable on the parameters there are differentiable pairs of saddle-node curves in the $(\omega,\lambda)$-parameter plane emanating from rational points on the line segment $L_0$. This forms the structure of \emph{tongues} in the parameter plane, see figure \ref{fig:tongharen}. The tongues where the rotation number is $\frac{p}{q}$ are called \emph{$p:q$-tongues} and the tongue where the rotation number is 1 (or 0) is called the \emph{main tongue}.
\begin{figure}[htbp]
\setlength{\unitlength}{1mm}
\begin{picture}(100,40)(0,0)
\put(50,5){\includegraphics{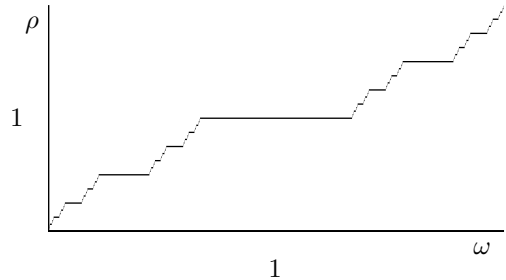}}
\put(106,2){$\omega$}
\put(47,32){$\rho$}
\put(45,19){$1$}
\put(79,-1){$1$}
\end{picture}
\caption{\textit{Devil's staircase: the graph of the rotation number $\rho$ as a function of $\omega$ for the Arnol'd map at a fixed value $\lambda=\lambda_0$. In the main tongue the rotation number is equal to $1$. Also see figure \ref{fig:tongharen}.}\label{fig:dutrap}}
\end{figure}

A rational rotation number $\frac{p}{q}$ for the map $A_{\omega,\lambda}$ is constant on closed intervals on the line segment $L_{\lambda_0}$. But the complement of the union of these closed intervals in $L_{\lambda_0}$ is not empty, it contains points where the rotation number is irrational. Again when $\lambda_0$ tends to zero there are smooth curves with sufficiently irrational rotation number ending in irrational points of $L_0$. These curves are sometimes called \emph{hairs}. The $(\omega,\lambda)$-parameter plane of the Arnol'd circle map consists mainly of \emph{tongues} and \emph{hairs}, see figure \ref{fig:tongharen}. The tongues and hairs fill a relatively large region in the parameter plane, when we take a point $(\omega,\lambda)$ at random, there is a positive probability that it belongs to a tongue, but there is a positive probability as well that it lies on a hair.

\subsection{A standard form for the circle diffeomorphism $\fmu$}\label{sec:standard}
The purpose of this section is to show that there is a standard form for every circle diffeomorphism. The Arnol'd map for example already is in this form. Identifying the standard form for the map $\fmu$ allows us to conclude that it has a tongues-and-hairs structure in a certain parameter plane for every 1-periodic Zeitgeber. Thus the standard form mainly serves a theoretical purpose, for actual computations it is far more advantageous to use the expression in equation \eqref{eq:dslift} for the lift of $\fmu$.

Theorem \ref{the:standard} below shows that every lift $C$ of a differentiable circle map of degree one \commentaar{voor de volledigheid...} can be written in the following form
\begin{equation*}
C(t) = t + P(t),
\end{equation*}
where $P$ is a 1-periodic function. We assume that $P$ is non-zero and non-constant. The Fourier coefficients of $P$ may be considered as parameters. Let $\avg{P}$ denote the average of $P$. Set $\omega = \avg{P}$ and $P_0 = P - \avg{P}$. Then $\omega$ is the constant term of the Fourier series of $P$ and $P_0$ has zero average. Consequently $\lambda = \max_{[0,1]}|P_0|$ is non-zero and we may set $P_{01}=\frac{1}{\lambda}P_0$. Finally we may write
\begin{equation}\label{eq:gencmap}
C_{\omega,\lambda}(t) = t + \omega + \lambda P_{01}(t).
\end{equation}
If we now interpret $\omega$ and $\lambda$ as parameters, then there is a tongues-and-hairs structure in the $(\omega,\lambda)$-plane. Note that we recover the Arnol'd map by setting $P_{01}(t) = \sin(2\pi t)$. In the more general family of equation \eqref{eq:gencmap} the tongues may have a richer structure than those of the Arnol'd map. For example there may be more saddle-node curves inside a tongue, as we will see in the next section. For a detailed description of such phenomena see \cite{bst}.\commentaar{perturbation of rigid rotation, possibly non-small}

Recall that $\Fmu(t) = U^{-1}_{\eta}(U_{\eps}(t+\alpha)-\alpha+\tau)$, where $U_{\eps}(t) = t + \eps \zgbr(t)$ is a lift of the circle diffeomorphism $\fmu$. First we give a standard form for the map $\Fmu$.

\begin{theorem}[Standard form]\label{the:standard}
Suppose that $\eta$ is small enough so that $U^{-1}_{\eta}$ exists. Let $\nu = (\sigma,\beta,\alpha,\tau)$ be new coordinates in parameter space with $\eps = \sigma \cos \beta$ and $\eta = \sigma \sin \beta$. Then there are smooth functions $\omega$ and $\lambda$ of the parameters $\nu$ with $\omega(\nu)|_{\sigma=0} = \tau$, $\lambda(\nu) = \sigma$ and a 1-periodic smooth function $R_{\nu}$ with zero average, smoothly depending on parameters $\nu$ such that 
\begin{equation*}
F_{\mu(\nu)}(t) = t + \omega(\nu) + \lambda(\nu) R_{\nu}(t).
\end{equation*}
\end{theorem}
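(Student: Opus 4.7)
The plan is to view $F_\mu$ as a perturbation of the rigid translation $t \mapsto t+\tau$ (which is recovered at $\sigma=0$) and then extract its average and a zero-average remainder. First I would introduce
\begin{equation*}
P(t,\nu) = F_{\mu(\nu)}(t) - t,
\end{equation*}
which is $1$-periodic in $t$ because $F_\mu$ is the lift of a degree-one circle map (proposition \ref{pro:ds}). Under the standing assumption that $\eta$ is small enough for $U_\eta^{-1}$ to exist, $P$ depends smoothly on both $t$ and $\nu$: smoothness in $\eps$ is immediate from the formula $U_\eps(t)=t+\eps\zgbr(t)$, and smoothness in $\eta$ follows from the implicit function theorem applied to $u+\eta\zgbr(u)=s$, which is applicable exactly where $U_\eta'(u)=1+\eta\zgbr'(u) > 0$.

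Next I would set
\begin{equation*}
\omega(\nu)=\int_0^1 P(s,\nu)\,ds, \qquad P_0(t,\nu) = P(t,\nu)-\omega(\nu),
\end{equation*}
so that $\omega$ is smooth in $\nu$, $P_0$ is smooth and $1$-periodic in $t$ with zero average, and $P=\omega+P_0$. At $\sigma=0$ both $\eps$ and $\eta$ vanish, hence $U_\eps=U_\eta=\id$ and $F_{\mu(\nu)}(t)=t+\tau$. Consequently $P(\cdot,\nu)\equiv\tau$ when $\sigma=0$, which gives at once the claimed value $\omega(\nu)|_{\sigma=0}=\tau$ and, crucially, $P_0(\cdot,\nu)\equiv 0$ when $\sigma=0$.

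Because $P_0$ vanishes identically on the locus $\sigma=0$ and depends smoothly on $\sigma$, Hadamard's smooth-division lemma yields a smooth function
\begin{equation*}
R_\nu(t)=\int_0^1 \partial_\sigma P_0\bigl(t,s\sigma,\beta,\alpha,\tau\bigr)\,ds,
\end{equation*}
which is $1$-periodic in $t$, smoothly depends on $\nu$, and satisfies $\sigma R_\nu(t)=P_0(t,\nu)$ by the fundamental theorem of calculus. Integrating this identity over one period in $t$ shows that $R_\nu$ inherits zero average from $P_0$. Setting $\lambda(\nu)=\sigma$ then produces the required decomposition
\begin{equation*}
F_{\mu(\nu)}(t)=t+\omega(\nu)+\lambda(\nu)R_\nu(t).
\end{equation*}

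The only non-mechanical step is the division that pulls out the factor $\sigma$ from $P_0$; this is the main technical point, and it works precisely because the new coordinates $(\sigma,\beta)$ are polar-type coordinates centred at $\eps=\eta=0$, so that the unperturbed lift $t+\tau$ is cut out by the single equation $\sigma=0$. Everything else is a routine consequence of the definitions and of the smooth parameter-dependence of $F_\mu$.
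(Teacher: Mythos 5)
Your proof is correct and follows essentially the same route as the paper: peel off the $1$-periodic part $P_\nu=F_\mu-\id$, split it into its average $\omega(\nu)$ and a zero-average remainder, observe that the remainder vanishes at $\sigma=0$, and divide out the factor $\sigma$ by Hadamard's lemma (the paper calls this ``the division property of $\cinf$ functions''). You are somewhat more explicit about smoothness in $\eta$ via the implicit function theorem and about the zero-average property of $R_\nu$, but the decomposition and the key step are the same.
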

The theorem shows that after a transformation of parameters, $\Fmu$ takes the standard form of a circle map. In particular we have the result that there is a tongues-and-hairs structure in the $(\tau, \sigma)$-plane, for each value of $\beta \in [0,\frac{\pi}{2})$ and each value of $\alpha \in [0,\tau)$.

The function $R$ depends on the Zeitgeber $\zgbr$, but also on all parameters $\nu$. The latter will appear in the coefficients of the Fourier series of $R$.

\begin{remark}
Setting $\eta=0$ in $\mu=(\eps,\eta,\alpha,\tau)$ corresponds to setting $\beta=0$ in $\nu = (\sigma,\beta,\alpha,\tau)$. Let $\avg{\zgbr} = \int_0^1 \zgbr(t)\,dt$ be the average of $\zgbr$ then we have
\begin{equation*}
F_{(\eps,0,\alpha,\tau)} = t + \tau + \eps \zgbr(t+\alpha) = t + \tau + \eps \avg{\zgbr} + \eps (\zgbr(t+\alpha)-\avg{\zgbr}).
\end{equation*}
If we also take $\zgbr(t) = \frac{1}{2}(1+\sin(2\pi t)$ we recover the result of lemma \ref{lem:ftoa}.
\end{remark}

\begin{remark}
We expect that for most values of $\beta$ the tongues in the $(\tau,\sigma)$-parameter plane are non-degenerate, that is they intersect transversely at the vertices. We even expect this for $\zgbr(t) = \frac{1}{2}(1+\sin(2\pi t))$, since in the standard form of $F$, $R_{\nu}$ will have a Fourier series rather than a Fourier polynomial. Also see \cite{bst}.
\end{remark}

Let us now look at the position of the main tongue. We assume that the Zeitgeber has the following Fourier series
\begin{equation}\label{eq:zgbr}
\zgbr(t) = c_0 + c_1 \sin(2\pi t) + \sum_{k>1} c_k \sin(2\pi(k t + \gamma_k)),
\end{equation}
with coefficients $c_k \in \fR$ and $\gamma_k \in [0,1]$. Note that by a time shift we can always achieve that $\gamma_1 = 0$. Then we have the next result.
\begin{proposition}[Boundaries of main tongue]\label{pro:bmt}
Let the Zeitgeber be as in equation \ref{eq:zgbr}, then the boundaries of the main tongue of the map $F_{\mu(\nu)}$ of are given by
\begin{equation}\label{eq:mt}
\tau_{\pm} = 1 - \sigma \Big[c_0 (\cos\,\beta - \sin\,\beta) \mp c_1 \sqrt{1-\cos(2\alpha\pi) \sin(2\beta)}\Big] + \cO(\sigma^2).
\end{equation}
\end{proposition}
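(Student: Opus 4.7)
The main tongue consists of parameter values for which $\fmu$ has a fixed point, i.e.\ for which $\Fmu(t) = t+1$ admits a solution. Its boundary corresponds to saddle-node bifurcations of such fixed points, equivalently to the values of $\tau$ at which the $t$-dependence of the fixed-point equation becomes degenerate. For $\sigma$ small the plan is therefore to parametrize the fixed-point locus by $t$, write $\tau$ as a function of $t$ along the locus, and take $\tau_{\pm} = \max_t \tau(t)$ and $\min_t \tau(t)$.

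To obtain $\tau(t)$ at leading order in $\sigma$, first expand $U_{\eta}^{-1}(s) = s - \eta \zgbr(s) + \cO(\eta^2)$, so that
\begin{equation*}
\Fmu(t) = t + \tau + \eps \zgbr(t+\alpha) - \eta \zgbr\bigl(t + \tau + \eps \zgbr(t+\alpha)\bigr) + \cO(\eta^2).
\end{equation*}
Setting $\Fmu(t) = t+1$, using periodicity $\zgbr(t+1) = \zgbr(t)$, and inserting $\eps = \sigma\cos\beta$, $\eta = \sigma\sin\beta$ with $\tau$ close to $1$, the first-order balance becomes
\begin{equation*}
\tau \,=\, 1 + \sigma\bigl[\sin\beta \cdot \zgbr(t) - \cos\beta \cdot \zgbr(t+\alpha)\bigr] + \cO(\sigma^2) \,=\, 1 + \sigma\, g(t) + \cO(\sigma^2).
\end{equation*}

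The final step is to compute $\max_t g$ and $\min_t g$ from the Fourier series \eqref{eq:zgbr}. The constant Fourier coefficient contributes the $t$-independent term $-c_0(\cos\beta - \sin\beta)$. For the first harmonic, expanding $\sin(2\pi(t+\alpha))$ by the angle-addition formula rewrites its contribution to $g$ as $A\sin(2\pi t) + B\cos(2\pi t)$ with $A = c_1(\sin\beta - \cos\beta\cos(2\pi\alpha))$ and $B = -c_1\cos\beta\sin(2\pi\alpha)$. A short algebraic simplification using $\sin^2\beta + \cos^2\beta = 1$ and $2\sin\beta\cos\beta = \sin(2\beta)$ yields $A^2 + B^2 = c_1^2\bigl(1 - \cos(2\pi\alpha)\sin(2\beta)\bigr)$, so that the amplitude of the sinusoid is $\pm c_1\sqrt{1 - \cos(2\pi\alpha)\sin(2\beta)}$. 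Combining with the $c_0$ contribution yields precisely \eqref{eq:mt}.

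The delicate point, which I would flag as the main obstacle, is the contribution of the higher harmonics ($k \geq 2$): these are also of size $\cO(\sigma)$ in $g$, and since different frequencies interfere, the locations of the extrema of $g$ are shifted by amounts depending on the $c_k$ and $\gamma_k$. To absorb these into $\cO(\sigma^2)$ as written, one must either restrict to Zeitgebers with $c_k = 0$ for $k > 1$, or argue that the extrema move by $\cO(c_k/c_1)$ in $t$ and hence only affect $\tau$ at a subleading order, provided the higher Fourier coefficients are treated as small of higher order.
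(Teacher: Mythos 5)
Your proposal follows the same route as the paper's proof: expand the fixed-point equation $\Fmu(t^*)=t^*+1$ in $\sigma$, parametrize $\tau$ along the fixed-point locus, and extract the extrema. You can in fact dispense with the expansion of $U_\eta^{-1}$ entirely: substituting $\Fmu(t^*)=t^*+1$ into \eqref{eq:dslift} and using $\zgbr(t^*+1)=\zgbr(t^*)$ gives the \emph{exact} relation $\tau = 1 + \eta\zgbr(t^*) - \eps\zgbr(t^*+\alpha) = 1 + \sigma g(t^*)$, so the main-tongue boundaries are precisely $\tau_\pm = 1+\sigma\max g$ and $1+\sigma\min g$ with $g(t)=\sin\beta\,\zgbr(t)-\cos\beta\,\zgbr(t+\alpha)$. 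Your Fourier computation of the first-harmonic amplitude $c_1\sqrt{1-\cos(2\pi\alpha)\sin(2\beta)}$ is correct and matches what the paper obtains.

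The caveat you flag at the end is a genuine issue, and it exposes an imprecision in the paper's own argument. The paper claims that ``a near identity transformation followed by a time shift'' reduces $F_{\mu(\nu)}$ to a map containing only the $c_0$ and $c_1$ terms, with the higher harmonics absorbed into $\cO(\sigma^2)$. Near the main tongue this cannot work: conjugating $t\mapsto t+\tau+\sigma P(t)$ by $t\mapsto t+\sigma h(t)$ replaces the $k$-th Fourier mode $p_k$ of $P$ by $p_k+h_k(1-e^{2\pi i k\tau})$, and with $\tau-1=\cO(\sigma)$ the divisor $1-e^{2\pi i k\tau}$ is itself $\cO(\sigma)$ for every $k$ (all modes are resonant at the $1\!:\!1$ tongue), so cancelling any harmonic needs $h_k=\cO(1/\sigma)$, i.e.\ the transformation is not near-identity. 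Since $g$ does not depend on $\sigma$, any contribution of $c_k$, $k\ge 2$, to $\max g$ or $\min g$ shifts $\tau_\pm$ by $\cO(\sigma)$, not $\cO(\sigma^2)$. Thus \eqref{eq:mt} is exact when $c_k=0$ for $k\ge 2$ (as the paper notes for the standard Zeitgeber), but for a general Zeitgeber the stated error order requires an extra hypothesis on the higher coefficients; the unconditional statement would simply be $\tau_\pm = 1+\sigma\,\{\max g,\min g\}$.
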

For our standard example of a Zeitgeber $\zgbr(t) = \frac{1}{2}(1+\sin(2\pi t))$ the boundaries of the main tongue are given by equation \eqref{eq:mt} with $c_0=c_1=\frac{1}{2}$ and without the $\cO(\sigma^2)$ term because all other coefficients are equal to zero.

\subsection{Fixed points of the diffeomorphism $\fmu$}\label{sec:fixpo}
Existence of stable fixed points of the circle diffeomorphism $\fmu$ is one of the main questions. Recall that such points correspond to entrainment in the biological model. Therefore we take a closer look at such points. From the previous sections we know that fixed points exist for parameter values in the main tongue. They are easily characterized as follows, $t=t^*$ is a fixed point of $\fmu$ if $\fmu(t^*) = t^*$. However, as noted before, for practical computations it is more convenient to use a lift $\Fmu$ of $\fmu$. Using the notation from proposition \ref{pro:ds} a point $t=t^*$ is a fixed point of $\fmu$ if
\begin{equation*}
\Fmu(t^*) = t^* + 1,
\end{equation*}
 which is equivalent to
\begin{equation}\label{eq:tster}
U_{\eps}(t^*+\alpha)-\alpha+\tau = U_{\eta}(t^*+1).
\end{equation}
Even if $\eps$ and $\eta$ are small enough so that $U_{\eps}$ and $U_{\eta}$ are invertible, this equation has several solution branches. Therefore we can not solve \eqref{eq:tster} explicitly for $t^*$. Since we are mostly interested in the dependence of $t^*$ on $\tau$ we shall content ourselves with
\begin{equation}\label{eq:tau}
\tau = U_{\eta}(t^*) - U_{\eps}(t^*+\alpha) + \alpha + 1.
\end{equation}
Thus for fixed values of $\eps$, $\eta$ and $\alpha$ we obtain $\tau$ as a function of $t^*$.

Now that we have characterized the fixed points of $\fmu$, let us determine their stability. The fixed point $t=t^*$ is stable if $|F'_{\mu}(t^*)| < 1$. After a short computation we find
\begin{equation}\label{eq:df}
F'_{\mu}(t^*) = \frac{U'_{\eps}(t^*+\alpha)}{U'_{\eta}(t^*)}.
\end{equation}
From equations \eqref{eq:tau} and \eqref{eq:df} we almost immediately obtain an alternative characterization of stability of the fixed point $t^*$. Two examples of application of the following lemma are given in figure \ref{fig:snbifs}.
\begin{lemma}[Stability]\label{lem:stab}
On solution branches of equation \eqref{eq:tster}, the fixed point $t^*$ is stable or unstable when $t^*$ as a function of $\tau$ is increasing or decreasing.
\end{lemma}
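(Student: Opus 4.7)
The plan is to derive the stability criterion from equations \eqref{eq:tau} and \eqref{eq:df} by comparing two derivatives, and then argue via the implicit function theorem on a given solution branch.

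First I would differentiate equation \eqref{eq:tau} with respect to $t^*$, treating $\eps$, $\eta$ and $\alpha$ as fixed parameters. This gives
\begin{equation*}
\frac{d\tau}{dt^*} = U'_{\eta}(t^*) - U'_{\eps}(t^*+\alpha).
\end{equation*}
On a solution branch (where $t^*$ and $\tau$ are locally related by a differentiable bijection) the inverse function theorem yields $dt^*/d\tau = 1/(d\tau/dt^*)$, so $t^*$ is an increasing (resp.\ decreasing) function of $\tau$ exactly when $U'_{\eta}(t^*) - U'_{\eps}(t^*+\alpha)$ is positive (resp.\ negative).

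Next I would bring in equation \eqref{eq:df}, which states that $F'_{\mu}(t^*) = U'_{\eps}(t^*+\alpha)/U'_{\eta}(t^*)$. Under the hypothesis of case \ref{itm:diffeo} both $U_{\eps}$ and $U_{\eta}$ are diffeomorphisms of $\fR$ of degree one, so $U'_{\eps}$ and $U'_{\eta}$ are strictly positive. Consequently $F'_{\mu}(t^*) > 0$, and the stability condition $|F'_{\mu}(t^*)| < 1$ collapses to the single inequality $U'_{\eps}(t^*+\alpha) < U'_{\eta}(t^*)$, which is precisely $d\tau/dt^* > 0$. The unstable case $F'_{\mu}(t^*) > 1$ is equivalent in the same way to $d\tau/dt^* < 0$. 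Combining with the inverse function theorem step above finishes the proof.

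The argument is essentially a one-line computation, so I do not anticipate a real obstacle. The only point that deserves care is the restriction to a single solution branch: equation \eqref{eq:tster} generically has several solutions $t^*$ for a given $\tau$, and the statement must be read branch by branch. The non-vanishing of $d\tau/dt^*$ away from the bifurcation points (where $F'_{\mu}(t^*) = 1$ and hence saddle-node fixed points are born) is exactly what guarantees that each branch locally defines $t^*$ as a monotone function of $\tau$ in the first place.
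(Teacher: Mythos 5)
Your proof is correct and is precisely the ``almost immediate'' argument the paper alludes to when it says the stability criterion follows from equations \eqref{eq:tau} and \eqref{eq:df}: differentiate \eqref{eq:tau} to get $d\tau/dt^* = U'_{\eta}(t^*)-U'_{\eps}(t^*+\alpha)$, note that positivity of $U'_{\eps}$ and $U'_{\eta}$ in the diffeomorphism case makes $F'_{\mu}(t^*)>0$ by \eqref{eq:df}, and then read $|F'_{\mu}(t^*)|\lessgtr 1$ as $d\tau/dt^*\gtrless 0$. The paper leaves this proof unwritten; your write-up, including the careful remark about restricting to one solution branch away from saddle-node points, fills in exactly the intended steps.
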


Let us again consider our standard Zeitgeber $\zgbr(t) = \frac{1}{2}(1+\sin(2\pi t))$. For fixed $\eps$, $\eta$ and $\alpha$ there are two values $t_1^*$ and $t_2^*$ for which $F'_{\mu}(t_i^*)| = 1$ and $|F'_{\mu}(t^*)| < 1$ if $t^* \in [t_1^*, t_2^*]$.\commentaar{For this particular Zeitgeber we even have $t_2^* = t_1^* + \frac{1}{2}$.} We recover the boundaries of the main tongue by setting $\tau_i = U_{\eta}(t_i^*) - U_{\eps}(t_i^*+\alpha) + \alpha + 1$, then a stable and an unstable fixed point of the map $\fmu$ exist for $\tau$ on the interval $[\tau_1,\tau_2]$. Motivated by this example one could conjecture that for every 1-periodic Zeitgeber with only two extrema in one period there are precisely two fixed points for parameter values in the main tongue. This turns out to be true when an extra condition is imposed.
\begin{proposition}[Number of fixed points]\label{pro:fixpos}
Let $\zgbr$ be a 1-periodic Zeitgeber with one maximum and one minimum on the interval $(0,1)$. Then the circle map $\fmu$ has precisely two fixed points for parameter values in the main tongue if $(\zgbr'')^2 - \zgbr'\cdot\zgbr'''$ does not change sign.
\end{proposition}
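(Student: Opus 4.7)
The plan is to use equation \eqref{eq:tau} to reduce the statement to a critical-point count for a single 1-periodic function. Setting
\begin{equation*}
g(t) := \eta\,\zgbr(t) - \eps\,\zgbr(t+\alpha),
\end{equation*}
equation \eqref{eq:tau} rearranges (using the 1-periodicity of $\zgbr$) to $g(t^*) = \tau-1$, so the number of fixed points of $\fmu$ on $[0,1)$ equals $\#\bigl(g^{-1}(\tau-1)\cap[0,1)\bigr)$. The boundary of the main tongue corresponds to saddle-node collisions of fixed points, i.e.\ to $\tau-1\in\{\min g,\max g\}$, while its interior corresponds to $\tau-1$ strictly between these two extremes. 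Hence the proposition follows once one shows that $g$ has exactly one local maximum and one local minimum per period, equivalently that
\begin{equation*}
g'(t) = \eta\,\zgbr'(t) - \eps\,\zgbr'(t+\alpha)
\end{equation*}
has precisely two zeros on $[0,1)$ for all admissible $\eps,\eta>0$ and $\alpha$.

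The key tool is the logarithmic derivative $w(t) := \zgbr''(t)/\zgbr'(t)$, defined off the two zeros $t_m,t_M$ of $\zgbr'$ (the minimum and maximum of $\zgbr$). The identity
\begin{equation*}
w'(t) = -\,\frac{(\zgbr''(t))^2 - \zgbr'(t)\,\zgbr'''(t)}{\zgbr'(t)^2}
\end{equation*}
shows that the sign hypothesis is equivalent to $w$ being strictly monotone, in the same direction, on each of the two open arcs $B_1=(t_m,t_M)$ and $B_2=(t_M,t_m+1)$ of $S^1\setminus\{t_m,t_M\}$. Because $|w|\to\infty$ at each endpoint with opposite signs, $w$ is in fact a monotone bijection $B_i\to\fR$.

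On $\{t:\zgbr'(t)\neq 0\}$ the equation $g'(t)=0$ rewrites as $H_\alpha(t)=\eta/\eps$, where $H_\alpha(t):=\zgbr'(t+\alpha)/\zgbr'(t)$ satisfies
\begin{equation*}
\frac{d}{dt}\log|H_\alpha(t)| = w(t+\alpha) - w(t).
\end{equation*}
For generic $\alpha$ the four points $t_m,t_M,t_m-\alpha,t_M-\alpha$ are distinct, and at each of them exactly one of $\zgbr'(t),\zgbr'(t+\alpha)$ changes sign; consequently the product of their signs alternates around $S^1$, dividing the circle into two arcs where $H_\alpha>0$ and two where $H_\alpha<0$. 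Each positive arc is a connected component of $B_i\cap(B_i-\alpha)$ for some $i$, so throughout the arc both $t$ and $t+\alpha$ lie in the same branch $B_i$; monotonicity of $w$ on $B_i$ then forces $w(t+\alpha)-w(t)$ to have constant sign, whence $H_\alpha$ is strictly monotone on that arc. One endpoint of the arc belongs to $\partial B_i$ (so $H_\alpha\to\pm\infty$) and the other to $\partial(B_i-\alpha)$ (so $H_\alpha\to 0$), so $H_\alpha$ restricts to a monotone bijection of the arc onto $(0,\infty)$. It follows that $H_\alpha=\eta/\eps$ admits exactly one solution on each of the two positive arcs, delivering the two fixed points claimed.

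The one step requiring care is the non-generic case in which two of the four points $t_m,t_M,t_m-\alpha,t_M-\alpha$ coincide; the corresponding values of $\alpha$ form a finite set and can be absorbed by continuity of the fixed-point count throughout the main tongue, or by a direct limiting argument on $H_\alpha$. A secondary verification is that on each positive arc one endpoint really lies in $\partial B_i$ and the other in $\partial(B_i-\alpha)$, since otherwise the arc would fill all of $B_i$, which would force $\alpha\equiv 0\pmod 1$.
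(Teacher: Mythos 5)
Your proof is correct and takes a genuinely different, more direct route than the paper's. Both arguments turn on the quantity $(\zgbr'')^2 - \zgbr'\zgbr'''$, but they use it in different ways. The paper's proof is a persistence argument: it observes that the number of solution branches of $\tau = g(t^*) + 1$ can change only where $g'$ and $g''$ vanish simultaneously, rewrites this as a linear homogeneous system in $(\eps,\eta)$ whose determinant is $\zgbr'(t)\zgbr''(t+\alpha)-\zgbr'(t+\alpha)\zgbr''(t)$, and shows the sign condition makes $h=\zgbr'/\zgbr''$ injective so this determinant never vanishes; the actual count of two is then inherited (implicitly) from a reference parameter such as $\eps=0$ or $\eta=0$. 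You instead count the zeros of $g'$ directly for all admissible parameters: rewriting $g'(t)=0$ as $H_\alpha(t)=\eta/\eps$ with $H_\alpha=\zgbr'(t+\alpha)/\zgbr'(t)$, you exploit the identity $(\log|H_\alpha|)'=w(t+\alpha)-w(t)$ with $w=\zgbr''/\zgbr'$ (the reciprocal of the paper's $h$), use strict monotonicity of $w$ on each branch of $\{\zgbr'\neq 0\}$ (equivalent to the sign hypothesis since $w'=-[(\zgbr'')^2-\zgbr'\zgbr''']/(\zgbr')^2$), and then carry out a careful structural analysis of the sign of $H_\alpha$ around the circle to conclude that $H_\alpha$ is a monotone bijection onto $(0,\infty)$ on each of the two arcs where it is positive. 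What your approach buys is a self-contained, quantitative proof that the count is exactly two, rather than merely constant; what it costs is the somewhat delicate bookkeeping of the four points $t_m,t_M,t_m-\alpha,t_M-\alpha$ on $S^1$ and the non-generic coincidence cases, which you correctly flag and dispatch by a continuity argument. One small point worth noting in both treatments: the count of two holds on the open main tongue, since on its boundary the stable and unstable fixed points collide in a saddle-node.
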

For our standard example of a Zeitgeber $\zgbr(t) = \frac{1}{2}(1+\sin(2\pi t))$ the quantity $(\zgbr'')^2 - \zgbr'\cdot\zgbr'''$ is equal to 1. The Zeitgeber $\zgbr(t) = \frac{2}{5}(\frac{4}{3} + \sin(2\pi t) + \frac{1}{3}\sin(4\pi t))$ also has two extrema, but the quantity $(\zgbr'')^2 - \zgbr'\cdot\zgbr'''$ changes sign on $[0,1]$. See the appendix for further implications. With a general periodic Zeitgeber there may be more than one stable fixed point for parameter values in the main tongue. This occurs in general when the Fourier series of the Zeitgeber contains more than just one term like in our standard example. Let us look at a Zeitgeber with the following Fourier polynomial $\zgbr(t) = \frac{3}{10}(2+\sin(2\pi t)+\cos(4\pi t))$, then there are four solutions of $F'_{\mu}(t_i^*)| = 1$. Let us sort them such that $\tau_i=U_{\eta}(t_i^*) - U_{\eps}(t_i^*+\alpha) + \alpha + 1$ increases with $i$. Then for fixed values of $\eps$, $\eta$ and $\alpha$ there are at most four fixed points of $\fmu$ for each $\tau \in [\tau_1, \tau_4]$. We graphically represent the results in figure \ref{fig:snbifs}.
\begin{figure}[htbp]
\setlength{\unitlength}{1mm}
\begin{picture}(100,50)(0,0)
\put(10,  5){\includegraphics{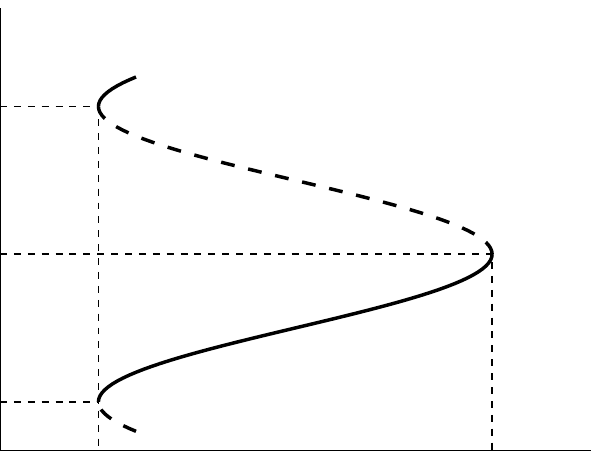}}
\put(5,  47){$t^*$}
\put(6,   9){$t_1^*$}
\put(6,  24){$t_2^*$}
\put(-1, 39){$t_1^*+1$}
\put(67,  0){$\tau$}
\put(19,  2){$\tau_1$}
\put(59,  2){$\tau_2$}
\put(90,  5){\includegraphics{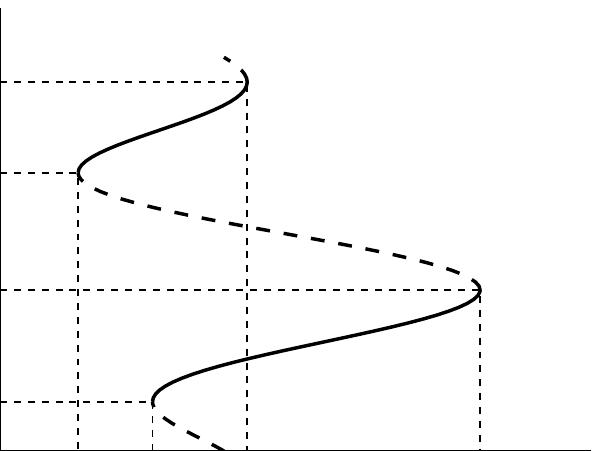}}
\put(85, 47){$t^*$}
\put(86 ,32){$t_1^*$}
\put(86,  9){$t_2^*$}
\put(86, 41){$t_3^*$}
\put(86, 20){$t_4^*$}
\put(147, 0){$\tau$}
\put(97,  2){$\tau_1$}
\put(104, 2){$\tau_2$}
\put(114, 2){$\tau_3$}
\put(138, 2){$\tau_4$}
\end{picture}
\caption{\textit{Fixed points $t^*$ of the map $\fmu$ as a function of $\tau$. The solid curve represents a stable fixed point, the dashed curve an unstable one. At parameter values $\tau=\tau_i$ there are saddle-node bifurcations, also see figure \ref{fig:tongharen}. Parameters $\eps$, $\eta$ and $\alpha$ are fixed. Left: Zeitgeber is $\zgbr(t) = \frac{1}{2}(1+\sin(2\pi t))$. Right: Zeitgeber is $\zgbr(t) = \frac{3}{10}(2+\sin(2\pi t)+\cos(4\pi t))$.}\label{fig:snbifs}}
\end{figure}

\subsection{Examples of tongues-and-hairs for different Zeitgebers}\label{sec:thz}
Here we collect some examples of tongues-and-hairs figures showing differences and similarities with the prototype figure of tongues in the Arnol'd circle map. Therefore we start with the latter, see figure \ref{fig:arnoldtongues}. In this family the tongues are fixed, but in our family $\fmu$, the tongues in the $(\tau,\sigma)$-plane still depend on the values of $\beta$ and $\alpha$. Moreover they also depend on the Zeitgeber. In order to keep the number of pictures limited we only show tongues for two different Zeitgebers, see figures \ref{fig:standardtongues} and \ref{fig:specialtongues}. As is to be expected form the existence of a standard form, see section \ref{sec:standard}, the pictures are qualitatively the same. That is near the $\tau$-axis or $\omega$-axis in the Arnol'd map. The width and the growth of the width of the tongues depends on parameters in the map. This is shown in figure \ref{fig:range}, in the next section, where its biological relevance is discussed.

\begin{figure}[htbp]
\setlength{\unitlength}{1mm}
\begin{picture}(100,50)(0,0)
\put(5,5){\includegraphics[width=15cm]{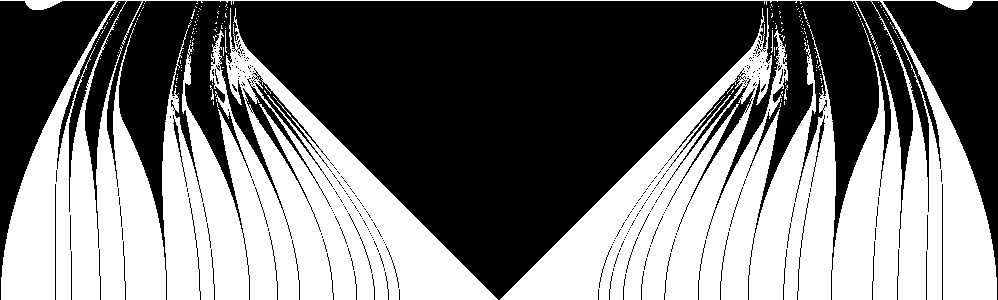}}
\put(140,0){$\omega$}
\put(1, 40){$\lambda$}
\put(0,  5){$0$}
\put(0, 49){$0.5$}
\put(3,  0){$0.5$}
\put(78, 0){$1.0$}
\put(153,0){$1.5$}
\end{picture}
\caption{\textit{Tongues for the Arnol'd map $A_{\omega,\lambda}$ in the $(\omega,\lambda)$-plane. The algorithm to compute the pictures is based on the rotation number. The latter is defined for sufficiently small values of $\lambda$ only. Therefore the tongue boundaries in the picture are not well defined for large values of $\lambda$. The effect of this phenomenon is even more visible in the following picture.}\label{fig:arnoldtongues}}
\end{figure}

\begin{figure}[htbp]
\setlength{\unitlength}{1mm}
\begin{picture}(100,50)(0,0)
\put(5,5){\includegraphics[width=15cm]{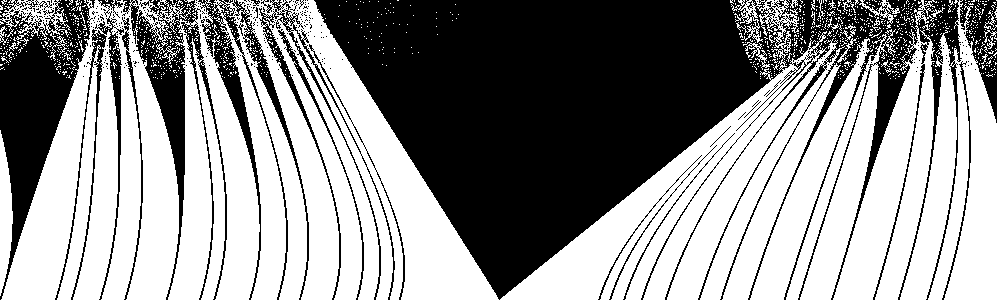}}
\put(140,0){$\tau$}
\put(1, 40){$\sigma$}
\put(0,  5){$0$}
\put(0, 49){$0.5$}
\put(3,  0){$0.5$}
\put(78, 0){$1.0$}
\put(153,0){$1.5$}
\end{picture}
\caption{\textit{Tongues for the circle map $\fmu$ in the $(\tau,\sigma)$-plane, with standard Zeitgeber $\zgbr(t) = \frac{1}{2}(1+\sin(2\pi t))$. The values of the parameters $\mu=(\eps,\eta,\alpha,\tau)$ are $\eps=\sigma \cos\,\beta$, $\eta=\sigma \sin\,\beta$ with $\beta=\frac{\pi}{3}$ and $\alpha=0.3$.}\label{fig:standardtongues}}
\end{figure}

\begin{figure}[htbp]
\setlength{\unitlength}{1mm}
\begin{picture}(100,50)(0,0)
\put(5,5){\includegraphics[width=15cm]{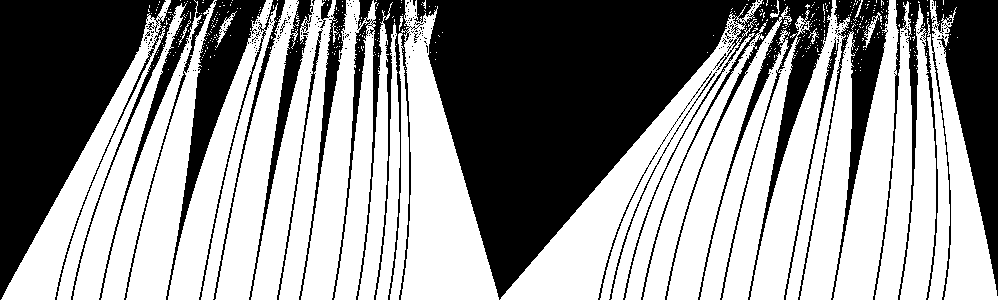}}
\put(140,0){$\tau$}
\put(1, 40){$\sigma$}
\put(0,  5){$0$}
\put(0, 49){$0.5$}
\put(3,  0){$0.5$}
\put(78, 0){$1.0$}
\put(153,0){$1.5$}
\end{picture}
\caption{\textit{Tongues for the circle map $\fmu$ in the $(\tau,\sigma)$-plane, with Zeitgeber $\zgbr(t) = \frac{3}{10}(2+\sin(2\pi t)+\cos(4\pi t))$. Parameter values as in figure \ref{fig:standardtongues}.}\label{fig:specialtongues}}
\end{figure}

\section{Discussion and conclusion}\label{sec:dc}

\commentaar{Biological interpretation of results from mathematical model, what is explained, what is predicted, comparison with biol data, do we need new math model for same biol model?, do we need new biol model? Or to early to say? Outlook: further investigation within scope of present math model, further investigation outside scope of present math model}

\commentaar{
\begin{itemize}\itemsep 0pt
\item[-] Model in globale termen, voor luie lezers
  \begin{itemize}\itemsep 0pt
    \item[-] cel met eigen periode en periodieke Zeitgeber, cirkel afbeelding
    \item[-] fase ruimte en parameter ruimte (parameters in de Zeitgeber?)
    \item[-] biologische betekenis der parameters (tau en T, experimentele toegankelijkheid)
    \item[-] diffeomorfismen nu en endomorfismen later
  \end{itemize}
\item[-] Dynamica in globale termen, voor luie lezers
  \begin{itemize}\itemsep 0pt
    \item[-] vaste en periodieke punten, quasi periodieke dyn
    \item[-] tongen in parameter ruimte
    \item[-] tongranden en SN-bifs, PF-bifs in de tongen
    \item[-] dynamica buiten de tongen
    \end{itemize}
\item[-] Puntsgewijze bespreking met biologische duiding
  \begin{itemize}\itemsep 0pt
    \item[-] vaste punten, hoofdtong, breedte en scheefheid van deze ('range of entrainment'), poolcoordinaten
    \item[-] periodieke punten
    \item[-] SN-bifs en/of PF-bifs in hoofdtong (experimenten)
    \item[-] quasi periodieke dynamica
  \end{itemize}
\item[-] Biologische observaties van zoogdieren, al dan niet met huidige model te verklaren
  \begin{itemize}\itemsep 0pt
    \item[-] in lengte varierende lichtpulsen en muis wiens ritme volgt tot deze ineens op ander ritme overgaat, ds verklaring mogelijk met meer stabiele takken en PF-bif als organiserend centrum, hiervoor is een Zeitgeber met parameter nodig
    \item[-] daglengte variatie (op polaire schaal) en muis die zijn activiteit interval aanzienlijk verlegt, ds verklaring is bv een zadel knoop bif
    \item[-] 'relative coordination', act-inact cyclus volgt schijnbaar een periode, na een tijd een andere, ds verklaring kan zijn dat er van zadel naar zadel gegaan wordt, of aanloopverschijnsel
    \item[-] ...
  \end{itemize}
\item[-] Toekomstige richtingen
  \begin{itemize}\itemsep 0pt
    \item[-] Huidig model (periodieke Zeitgeber)
    \begin{itemize}\itemsep 0pt
      \item[-] diffeomorfisme
      \item[-] endomorfisme
    \end{itemize}
    \item[-] Huidig model, quasi periodieke Zeitgeber
    \item[-] Model voor meer pacer cellen
    \begin{itemize}\itemsep 0pt
      \item[-] geen interactie maar periodieke Zeitgeber (kan met huidig model)
      \item[-] alleen onderlinge interactie
      \item[-] interactie en periodieke Zeitgeber
      \item[-] interactie en quasi periodieke Zeitgeber
    \end{itemize}
  \end{itemize}
\end{itemize}
}

\subsection*{Model}\label{sec:dcmod}
The biological model for a single pacer cell with a periodic Zeitgeber as described in section \ref{sec:biomod} leads to quantitative mathematical model, namely a dynamical system, which we now summarize. The heart of the mathematical model is a circle map $\fmu$ that yields once $t_0$ is given, a sequence of relative times $t_n$ corresponding to the beginning of an activity interval of the pacer cell. The sequence $t_0, t_1=\fmu(t_0),t_2=\fmu(t_1),\ldots$ is called the itinerary of $t_0$. In this setting time is relative to the period $T$ of the periodic Zeitgeber that acts as a stimulus to the pacer cell. It is convenient to use $T$ as a unit of time, that is we scale time so that $T=1$. The circle map $\fmu$ depends on parameters $\mu = (\eps, \eta, \alpha, \tau)$, where $\tau$ is the intrinsic period of the pacer cell, measured in time unit $T=1$. Furthermore $\eps$ and $\eta$ determine phase delay and phase advance and $\alpha$ determines the length of the intrinsic activity interval, see section \ref{sec:biomod} for an explanation of these terms.

\commentaar{Note on comparing results from experiments and observations on animals with results from the model for a single pacer cell. Model for collection of pacer cells will have richer dynamical behavior than current model.}

\subsection*{Dynamics}\label{sec:dcdyn}
In the description of the dynamical behavior of the map $\fmu$ there is a difference between $\fmu$ being invertible (diffeomorphic) or not. We mainly restrict to the invertible case. Furthermore we restrict to \emph{typical dynamics}, see \cite{bt} for a precise definition. In the present case this comprises fixed points, periodic points and quasi-periodic points. Note that the itinerary of the latter consists of an infinite sequence.

The simplest kind of dynamics is a fixed point of $\fmu$ which means that the activity interval of the pacer cell always starts at the same relative time. The existence of such points implies the possibility of entrainment. The next simplest kind of dynamics is a periodic point of $\fmu$. This means that there is a $t_0$ such that in the sequence $t_0, t_1=\fmu(t_0),\ldots,t_q=\fmu(t_{q-1})$ the last point $t_q$ is again equal to $t_0$ for some fixed $q$. In general these $q$ onsets of activity occur in $p$ periods of the Zeitgeber. We call this synchronization of the pacer cell, a generalization of entrainment. The periodic point $t_0$ is called $p:q$ periodic. The last kind of typical dynamics for the map $\fmu$, is quasi periodicity. The point $t_0$ is quasi-periodic if the itinerary of $t_0$ densely fills the circle or interval $[0,1]$ depending on how we represent the map.

The kind of dynamics the map $\fmu$ exhibits depends on the values of the parameters. All this is nicely organized in parameter space in wedge shaped regions called tongues, one for each pair $(p,q)$, see figure \ref{fig:tongharen}. However, here we need the restriction that $\fmu$ is invertible. For parameter values in the $(p,q)$ tongue, the typical dynamics of $\fmu$ is $p:q$-periodicity. A special role is played by the tongue for $(p,q)=(1,1)$ called the main tongue. Fixed points are the typical dynamics of $\fmu$ for parameter values in the main tongue. At the boundaries of the tongues the map $\fmu$ has a saddle-node bifurcation, see section \ref{sec:specases}. For parameter values outside the tongues the dynamics of the map is quasi-periodic, this occurs on hairs in the $(\tau,\sigma)$-parameter plane. If $\fmu$ is not invertible, tongues still exist but generally overlap so that coexistence of periodic points with different periods becomes possible.

\subsection*{Entrainment}\label{sec:dcent}
For parameter values in the main tongue, the tongue with $(p,q)=(1,1)$, the map $\fmu$ has fixed points. There are at least two such points of which one is stable and the other is unstable. The stable one corresponds to entrainment of the pacer cell. Let us discuss the shape of the main tongue in some more detail to find the parameter values for which entrainment occurs.

The parameter space is four dimensional and in it tongue boundaries are hyper-surfaces. The situation becomes simpler if we do not use coordinates $(\eps,\eta,\alpha,\tau)$ but $(\sigma,\beta,\alpha,\tau)$ with $\eps=\sigma\cos\beta$ and $\eta=\sigma\sin\beta$. Then $\sigma$ satisfies $\sigma^2 = \eps^2 + \eta^2$ and measures how strongly the Zeitgeber stimulates the pacer cell, while $\beta$ determines the ratio of $\eps$ and $\eta$. In these coordinates in parameter space the boundaries of the main tongue are given by equation \eqref{eq:mt}. As we can see from the standard form in theorem \ref{the:standard} the main parameters are $\tau$ and $\sigma$. In the $(\tau,\sigma)$-plane we find the tongues, see figure \ref{fig:range}, whose detailed shape depends on $\alpha$ and $\beta$. The \emph{range of entrainment} is given by the interval $(\tau_-,\tau_+)$, with $\tau_{\pm}$ given in equation \eqref{eq:mt}, when other parameters are kept fixed. There are many biological experiments/observations supporting the existence of bounded ranges of entrainment, see \cite{da01}. In practice one cannot vary the intrinsic period $\tau$ of the pacer cell. But varying the period of the Zeitgeber $T$ at a fixed value of $\tau$ is equivalent to varying $\tau$ and fixing $T$ in the model, see \cite{ap78,rm01,uto}. However, biological evidence exists that the intrinsic period varies among individual pacer cells while they can still entrain to a Zeitgeber with a 24-hour period, see \cite{hhh,hnsh,hsknh,shkoh,wlmr}.

\begin{figure}[htbp]
\setlength{\unitlength}{1mm}
\begin{picture}(100,45)(0,0)
\put(20, 5){\includegraphics{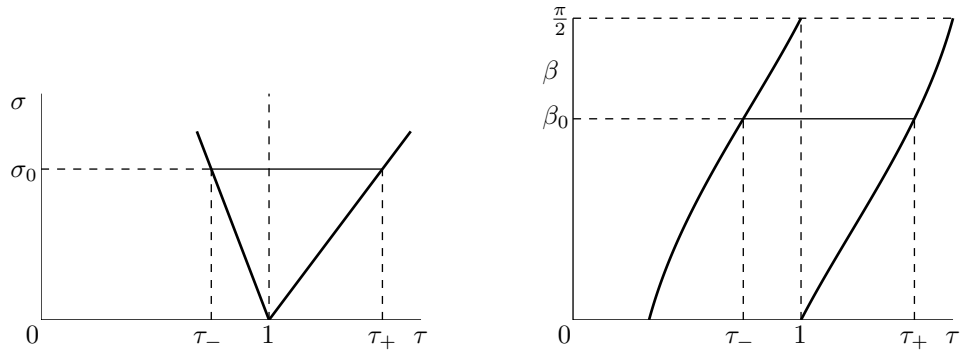}}
\put(16,33){$\sigma$}
\put(16,24){$\sigma_0$}
\put(69, 2){$\tau$}
\put(40, 2){$\tau_-$}
\put(63, 2){$\tau_+$}
\put(18, 2){0}
\put(49, 2){1}
\put(87, 44){$\frac{\pi}{2}$}
\put(86, 37){$\beta$}
\put(86, 31){$\beta_0$}
\put(139, 2){$\tau$}
\put(110, 2){$\tau_-$}
\put(133, 2){$\tau_+$}
\put(88,  2){0}
\put(119, 2){1}
\end{picture}
\caption{\textit{Left: the main tongue of the circle map $f_{\nu}$, where $\nu=(\sigma,\beta,\alpha,\tau)$ with $\alpha=0.3$ and $\beta=\beta_0$ fixed. For $\sigma=\sigma_0$ the range of entrainment is the width of the tongue at $\sigma=\sigma_0$, namely the interval $(\tau_-,\tau_+)$. Right: the range of entrainment $(\tau_-,\tau_+)$ for fixed $\sigma=\sigma_0$ and $\beta$ varying in $(0,\frac{\pi}{2})$. The parameter $\beta$ determines the ratio of $\eps$ and $\eta$ since $\eps=\sigma\cos\beta$ and $\eta=\sigma\sin\beta$.}\label{fig:range}}
\end{figure}

In section \ref{sec:fixpo} on fixed points of the map $\fmu$ we noted that the position of stable fixed points is an increasing function of the intrinsic period $\tau$. This has been observed by various authors, for example \cite{asc65,rm03} and for particular organisms by \cite{asc81,ap78,pd76,rm01}. In figure \ref{fig:relonsets} we show data from \cite{rm01} essentially giving the relative onset times of the activity interval as a function of the intrinsic period, of several mutants of the fungus \emph{Neurospora crassa}.

\begin{figure}[htbp]
\setlength{\unitlength}{1mm}
\begin{picture}(100,70)(0,0)
\put(30,5){\includegraphics[width=10cm]{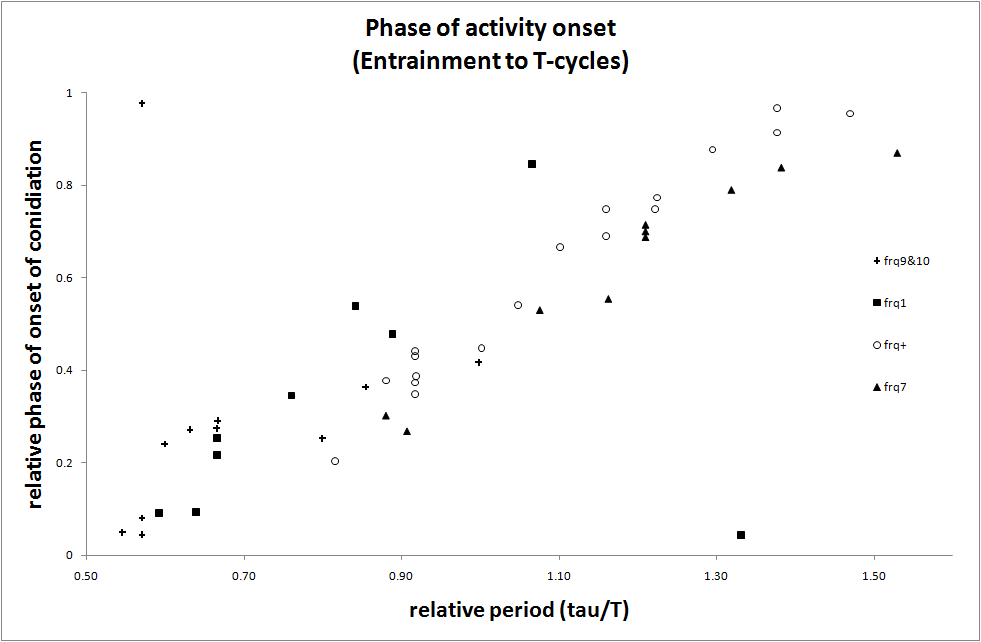}}
\end{picture}
\caption{\textit{Relative times of onset of activity interval for four mutants of \emph{Neurospora crassa}. The genetic types are indicated by markers, see \cite{rm01} for an explanation of \emph{frq1}, \emph{frq+}, \emph{frq7} and \emph{frq9\&10}.}\label{fig:relonsets}}
\end{figure}

Both position and length of the interval $(\tau_-,\tau_+)$ depend on $\alpha$ and $\beta$. As we see from figure \ref{fig:range} the range of entrainment is in general not centered at $\tau=1$. Here $\beta$ is the most important parameter. If $\beta < \frac{\pi}{4}$ or equivalently $\eps > \eta$, then phase delay is larger than phase advance and the range of entrainment is shifted towards intrinsic periods $\tau$ smaller than the period $T$ of the Zeitgeber. If $\beta > \frac{\pi}{4}$, the range of entrainment is shifted in the direction of intrinsic periods $\tau$ larger than the period $T$ of the Zeitgeber. The model has two extreme cases, one for $\beta=0$ or equivalently $\eta=0$ (only phase delay), where only pacer cells with intrinsic period $\tau$ less than the period $T$ of the Zeitgeber can be entrained. The other one is for $\beta=\frac{\pi}{2}$ or equivalently $\eps=0$ (only phase advance), where $\tau$ must be larger than $T$ for entrainment to occur. A phenomenon related to this skewness has been observed in several nocturnal rodents \cite{pd76} where a decreasing center of the range of entrainment corresponds to increasing phase delay and decreasing phase advance.

As mentioned before, in the main tongue, the map $\fmu$ has at least a pair of fixed points, one stable and one unstable. However more pairs may exist, leading to curves of saddle-node bifurcations inside the main tongue, see figure \ref{fig:snbifs}. Thus upon varying $\tau$ a fixed point may lose its stability and the system jumps to another stable fixed point. As long as parameter values remain in the main tongue this is the only possibility. Another possibility is to keep the parameters $\mu$ fixed but vary a parameter in the Zeitgeber. For a biological example possibly related to such a mechanism see \cite{spo}.

\subsection*{Synchronization}\label{sec:dcsyn}
Apart from entrainment, the model also shows the possibility of the more general phenomenon called synchronization. This occurs for parameter values in the $p:q$ tongues in the $(\tau,\sigma)$-plane. Then the circle map $\fmu$ has a $q$-periodic orbit consisting of points $t_0,t_1,\ldots,t_{q-1}$ indicating the beginnings of $q$ activity intervals of the pacer cell in $p$ periods of the Zeitgeber. In the $p:q$ tongues we have the same phenomena as in the main tongue, the only difference is that they apply to periodic points instead of fixed points.\commentaar{bio refs, examples, certain nocturnal animals with two activity intervals in one 24-hour period? 1:2}

There is an example of 2:1 periodic point, the fungus \emph{Neurospora sp} has one activity interval in two periods of the Zeitgeber, see \cite{mbrmgr}. Here the Zeitgeber is a temperature stimulus.

\subsection*{No synchronization}\label{sec:dcqp}
For parameter values outside the tongues the circle map has quasi-periodic orbits. This corresponds to quasi-periodic occurrence of activity intervals of the pacer cell. In practice such behavior may be hard to distinguish from periodic behavior with a long period.\commentaar{bio refs, examples}

%

\subsection*{Future directions}\label{sec:dcfudi}
There are several ways to generalize or extend the current model for a single pacer cell with a periodic Zeitgeber. We first concentrate on the biological model of section \ref{sec:biomod} in view of our future goal to describe a collection of interacting pacer cells, stimulated by a Zeitgeber. 

\paragraph{Single pacer cell, circle map} In the present analysis we restricted to the case that the map $\fmu$ is a circle diffeomorphism (invertible map). Then the $(\tau,\sigma)$-plane is divided into tongues with well-defined periodic dynamics and hairs with quasi-periodic dynamics. This only occurs when the stimulus of the Zeitgeber on the pacer cell is relatively weak. Allowing a stronger stimulus, the map $\fmu$ becomes an endomorphism (non-invertible map) with far richer bifurcation scenarios, see for example \cite{bst}. However, it is not clear whether the subtleties of the endomorphism case are in accordance with the coarseness of the underlying biological model. Therefore we restrict to the simplest properties of each model.

\paragraph{Single pacer cell, torus map} Thus it seems more fruitful to generalize in another direction and consider a quasi-periodic Zeitgeber. This will lead to a torus map instead of a circle map. Such a generalization is also more relevant for our aim to study a collection of pacer cells, by first considering a single cell with an asymmetric interaction with its environment. The latter stimulates the pacer cell, but not vice versa.

\paragraph{Collection of pacer cells, no interaction} In a model for a collection of pacer cells we can already use the results for a single cell. As a first model let us assume that the cells are stimulated by an external periodic Zeitgeber but do not interact. Although biologically not particularly relevant, it is a step in gradually sophisticating the model. Furthermore suppose that there are $n$ cells, characterized by parameter values $(\eps_i, \eta_i, \alpha_i, \tau_i)$ in the main tongue for $i=1,\ldots,n$. That is we apply the current single pacer cell model for each cell. Then the collection will be entrained to the Zeitgeber albeit that each cell has its own onset time and length of activity interval. The collective behavior though, will be periodic.

\paragraph{Collection of pacer cells, with interaction} We conjecture that, starting with the previous model, for a sufficiently weak interaction there will still be entrainment. Nevertheless it will be interesting to consider a model for interacting pacer cells without Zeitgeber as well. Here we may take the pacer cells nearly identical in the sense of the previous paragraph. However there is biological evidence that there are different types of pacer cells \cite{hnsh,wahh,wlmr}. In further extensions one could again include a periodic Zeitgeber. Since in a model for a collection of pacer cells the Zeitgeber acts solely as an external stimulus, it seems most natural to restrict to periodic Zeitgebers. However we may wish to include both daily and seasonally variations. This would imply that the period of the Zeitgeber is a year. Another possibility is to stick to Zeitgebers with a 24-hour period and use methods for slowly varying parameters to include seasonal changes.


%

\appendix

\section{Bifurcations}\label{sec:apppfbifs}

\commentaar{some introductory talk on:
- pf bifurcation
- parameter dependent Zeitgeber
- period doubling in tongues for endomorphisms, referring to BST
- period doubling picture of Kim
}

The tongues in the $(\sigma,\tau)$-parameter plane are determined by saddle-node bifurcations. But there may also be other bifurcations even when the map $\fmu$ is a diffeomorphism, in other words invertible. The reason is that we have many parameters. Considering $\alpha$ as a relatively unaccessible parameter and keeping it fixed we may still vary $\sigma$, $\tau$ and $\beta$. Then we have 3-dimensional tongues in $(\sigma,\tau,\beta)$-parameter space. It is an almost straightforward consequence of theorem \ref{pro:fixpos} that there are curves of pitchfork bifurcations in this three dimensional parameter space, emanating from rational points on the $\tau$-axis.

\begin{corollary}[Pitchfork bifurcation]\label{col:pfbif}
Let $\zgbr$ be a 1-periodic Zeitgeber. If $(\zgbr'')^2 - \zgbr'\cdot\zgbr'''$ has a simple zero, then parameter values exist for which $\fmu$ has a pitchfork bifurcation.
\end{corollary}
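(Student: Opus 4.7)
The plan is to exhibit parameter values $\mu=(\eps,\eta,\alpha,\tau)$ and a point $t^*$ at which the lift $\Fmu$ satisfies the four pitchfork conditions $\Fmu(t^*)=t^*+1$, $\Fmu'(t^*)=1$, $\Fmu''(t^*)=0$ and $\Fmu'''(t^*)\neq 0$. Starting from $\Fmu(t)=U_\eta^{-1}(U_\eps(t+\alpha)-\alpha+\tau)$, the identity $U'_\gamma(t)=1+\gamma\zgbr'(t)$, and the $1$-periodicity of $\zgbr$, a short differentiation rewrites these four requirements respectively as
\begin{align*}
\tau &= 1+\eta\zgbr(t^*)-\eps\zgbr(t^*+\alpha),\\
\eps\zgbr'(t^*+\alpha) &= \eta\zgbr'(t^*),\\
\eps\zgbr''(t^*+\alpha) &= \eta\zgbr''(t^*),\\
\eps\zgbr'''(t^*+\alpha) &\neq \eta\zgbr'''(t^*).
\end{align*}

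Reading the two middle relations as a homogeneous linear system in $(\eps,\eta)$, a non-trivial solution exists precisely when
\[
D(t^*,\alpha) := \zgbr'(t^*)\zgbr''(t^*+\alpha)-\zgbr'(t^*+\alpha)\zgbr''(t^*) = 0.
\]
The principal obstacle is that $D(\cdot,0)\equiv 0$, so the implicit function theorem does not apply to $D$ directly. The key step is to factor out the trivial zero and write $D(t^*,\alpha)=\alpha\,G(t^*,\alpha)$ with $G$ smooth and $G(t,0)=\zgbr'(t)\zgbr'''(t)-\zgbr''(t)^2=-Q(t)$, where $Q=(\zgbr'')^2-\zgbr'\zgbr'''$. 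A simple zero of $Q$ at $t_0$ then gives $G(t_0,0)=0$ and $\partial_{t^*}G(t_0,0)=-Q'(t_0)\neq 0$, and the IFT produces a smooth curve $\alpha\mapsto t^*(\alpha)$ with $t^*(0)=t_0$ along which $D$ vanishes. Note that simplicity of the zero excludes the pathological case $\zgbr'(t_0)=\zgbr''(t_0)=0$, in which $Q'(t_0)$ would also vanish, so in particular $\zgbr'(t_0)\neq 0$.

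For small $\alpha\neq 0$ on this curve the ratio $\eta/\eps=\zgbr'(t^*(\alpha)+\alpha)/\zgbr'(t^*(\alpha))$ is well defined; writing $\eps=\sigma\cos\beta$ and $\eta=\sigma\sin\beta$, this fixes $\beta$ (close to $\pi/4$), leaves $\sigma>0$ free, and determines $\tau$ from the first equation, so for each fixed small $\alpha\neq 0$ we obtain a curve of bifurcation candidates in $(\sigma,\beta,\tau)$-space. The cubic non-degeneracy $\Fmu'''(t^*)\neq 0$ reduces, by the same differentiation scheme, to the non-vanishing of $H(t^*,\alpha):=\zgbr'(t^*)\zgbr'''(t^*+\alpha)-\zgbr'(t^*+\alpha)\zgbr'''(t^*)$, which likewise factors as $\alpha$ times a smooth function with leading coefficient $\partial_\alpha H(t_0,0)=\zgbr'(t_0)\zgbr''''(t_0)-\zgbr''(t_0)\zgbr'''(t_0)=-Q'(t_0)\neq 0$; hence $H(t^*(\alpha),\alpha)\neq 0$ for small $\alpha\neq 0$. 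Standard transversality in $\tau$ then completes the normal-form identification. Thus the simple zero of $Q$ is used twice: once to supply the transversality driving the IFT, and once to furnish the cubic non-degeneracy distinguishing a pitchfork from the higher-codimension singularity that occurs at $\alpha=0$, where for $\eps=\eta$ and $\tau=1$ the map collapses to the identity on $S^1$ and every condition above holds trivially.
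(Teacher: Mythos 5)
Your proof is correct, and it reconstructs the same core mechanism as the paper's one-line argument: the degeneracy conditions $F'_\mu(t^*)=1$ and $F''_\mu(t^*)=0$ become the homogeneous linear system $\eps\zgbr'(t^*+\alpha)=\eta\zgbr'(t^*)$, $\eps\zgbr''(t^*+\alpha)=\eta\zgbr''(t^*)$, whose determinant $D(t^*,\alpha)=\zgbr'(t^*)\zgbr''(t^*+\alpha)-\zgbr'(t^*+\alpha)\zgbr''(t^*)$ is exactly the quantity the paper extracts in the proof of proposition \ref{pro:fixpos} and connects to $Q=(\zgbr'')^2-\zgbr'\zgbr'''$ through $h=\zgbr'/\zgbr''$. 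Where you diverge is in rigor rather than route, and the divergence is substantial. The paper's proof of the corollary is a single sentence that merely asserts the number of solution branches of equation \eqref{eq:tster} changes at a simple zero of $Q$, by reference to the earlier proposition. It never explains how the degenerate configuration is actually attained for admissible parameters, and it never checks the cubic condition $F'''_\mu(t^*)\neq 0$ that distinguishes a pitchfork/cusp from a more degenerate singularity. You supply both missing pieces: the factorization $D(t,\alpha)=\alpha\,G(t,\alpha)$ with $G(t,0)=-Q(t)$ makes the implicit function theorem applicable and yields a genuine curve $(t^*(\alpha),\alpha)$ of degenerate parameter values for $\alpha\neq 0$ small (rather than the trivial locus $\alpha=0$ on which $D$ vanishes identically); and the parallel factorization of $H(t,\alpha)=\zgbr'(t)\zgbr'''(t+\alpha)-\zgbr'(t+\alpha)\zgbr'''(t)$, with leading coefficient again $-Q'(t_0)$, verifies the cubic non-degeneracy. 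Your observation that the same quantity $Q'(t_0)\neq 0$ simultaneously drives the IFT transversality and the cubic non-degeneracy is a nice structural point the paper misses entirely. Two small remarks: the inference that a simple zero of $Q$ forces $\zgbr'(t_0)\neq 0$ is right but slightly compressed — the cleanest chain is that $Q(t_0)=0$ and $\zgbr'(t_0)=0$ together force $\zgbr''(t_0)=0$, whence $Q'(t_0)=\zgbr''\zgbr'''-\zgbr'\zgbr''''=0$, contradicting simplicity; and the closing appeal to ``standard transversality in $\tau$'' to finish the normal-form identification is, like the paper, left at the level of a remark rather than a verification, but this is consistent with the paper's own level of precision.
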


\begin{proof}
From the proof of proposition \ref{pro:fixpos} we see that the number of solution branches of equation \eqref{eq:tster} does change if $\zgbr''(t)^2 - \zgbr'(t) \cdot \zgbr'''(t)$ has a simple zero.
\end{proof}

However, $\beta$ may be considered as an unaccessible parameter as well. But the Zeitgeber may also depend on a parameter. We may in particular view seasonal change, which is slow compared to the 24-hour period, as parameter dependence. The quantity $(\zgbr'')^2 - \zgbr'\cdot\zgbr'''$ may change sign depending on this parameter, so we find again pitchfork bifurcations.\commentaar{bio implications?}

If the map $\fmu$ is not a diffeomorphism (not invertible) then there are numerous other bifurcations, see \cite{bst}. This happens for relatively large values of $\sigma$. On varying $\tau$ for fixed values of the other parameters in the main tongue, one generally finds a number of period doublings followed by the same number of period halvings (or in opposite order). The reason is that there are curves of period doublings in the $(\tau,\sigma)$-parameter plane with local minima, considered as functions of $\tau$, that are transversally crossed. For more details we refer again to \cite{bst}. An example of this phenomenon is shown in figure \ref{fig:pdh}.

\begin{figure}[htbp]
\setlength{\unitlength}{1mm}
\begin{picture}(100,50)(0,0)
\put(5,5){\includegraphics[width=15cm]{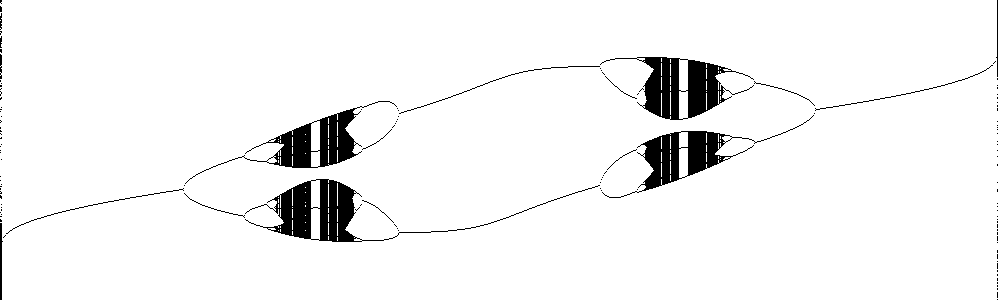}}
\put(140,0){$\tau$}
\put(1, 40){$t$}
\put(0,  5){$0$}
\put(0, 49){$0.5$}
\put(3,  0){$0.72$}
\put(153,0){$1.05$}
\end{picture}
\caption{\textit{Period doublings and halvings for the map $\fmu$ with Zeitgeber $Z(t) = 3+\sin(2\pi t)+2\cos(4\pi t)$. Shown are the positions of periodic points as a function of the intrinsic period $\tau$. Parameter values are $\alpha=0.5$, $\eps=0.4$ and $\eta=0.17$.}\label{fig:pdh}}
\end{figure}

\section{Proofs}\label{sec:appproofs}
\begin{proof}[Proof of proposition \ref{pro:ds}]
The main point we have to show is that equation \eqref{eq:t} can be solved uniquely with respect to $t$. Using the expression for $\theta$ in \eqref{eq:theta} we obtain after some rearranging
\begin{equation}\label{eq:tnp1}
t_{n+1} + \eta \zgbr(t_{n+1}) = t_n + \alpha + \eps \zgbr(t_n+\alpha) - \alpha + \tau.
\end{equation}
Since $\zgbr$ is $1$-periodic, $U_{\eps}$ has the property $U_{\eps}(t+1) = U_{\eps}(t) + 1$, for all $t$. Then the equation for $t_{n+1}$ reads
\begin{equation*}
U_{\eta}(t_{n+1}) = U_{\eps}(t_n+\alpha) - \alpha + \tau.
\end{equation*}
\commentaar{wat voegt die operator toe??}
Introducing the operator $\cT_{\alpha}$ which takes a function $f$ into $\cT_{\alpha} f = T_{-\alpha} \circ f \circ T_{\alpha}$ where $T_{\alpha}$ is just translation over $\alpha$, that is $T_{\alpha}(t) = t+\alpha$ we can write the equation for $t_{n+1}$ as
\begin{equation}\label{eq:tnplus1}
U_{\eta}(t_{n+1}) = \cT_{\alpha} U_{\eps}(t_n) + \tau.
\end{equation}
Solvability of \eqref{eq:tnplus1} depends on the value of $\eta$. If $\eta$ is small enough, $U_{\eta}$ is invertible and we write
\begin{equation*}
t_{n+1} = \Fmu(t_n) = U_{\eta}^{-1}(\cT_{\alpha} U_{\eps}(t_n) + \tau).
\end{equation*}
Then $\Fmu$ is a differentiable map with the property $\Fmu(t+1)=\Fmu(t)+1$, which means that $\Fmu$ is the \emph{lift} of a \emph{circle map} of degree one. Note that $\Fmu$ depends on the parameters $\mu = (\eps, \eta, \alpha, \tau)$.
\end{proof}

\begin{proof}[Proof of theorem \ref{the:standard}]
A consequence of $\Fmu(t+1)=\Fmu(t)+1$ is that $\Fmu(t)-t$ is 1-periodic, thus there is a 1-periodic $\cinf$ function $P_{\nu}$ such that $\Fmu(t)=t+P_{\nu}(t)$. $P_{\nu}$ has a Fourier series so we may split off the constant term and we write $P_{\nu}(t)=\omega(\nu)+Q_{\nu}(t)$ with $\omega(\nu)=\int_0^1 P_{\nu}(t)\,dt$, then $\omega$ is a $\cinf$ function of $\nu$. Furthermore $Q_{\nu}(t) = P_{\nu}(t) - \omega(\nu)$ so that $Q_{\nu}$ is a 1-periodic $\cinf$ function with $\int_0^1 Q_{\nu}(t)\,dt=0$. So far $\Fmu(t)=t+\omega(\nu)+Q_{\nu}(t)$. For $\sigma=0$ we have $\Fmu(t)=t+\tau$, so $\omega(0,\beta,\alpha,\tau)=\tau$ and $Q_{(0,\beta,\alpha,\tau)}(t)=0$. Using the division property of $\cinf$ functions, a 1-periodic, $\cinf$ function $R_{\nu}$ exists such that $Q_{\nu}(t)=\sigma R_{\nu}(t)$. Finally $F_{(\sigma \cos\,\beta, \sigma \sin\,\beta, \alpha, \tau)}(t)=t+\omega(\nu)+\sigma R_{\nu}(t)$.
\end{proof}

\begin{proof}[Proof of proposition \ref{pro:bmt}]
Recall that $\mu=(\eps,\eta,\alpha,\tau)$ and $\nu=(\sigma,\beta,\alpha,\tau)$ with $\eps=\sigma\cos\beta$ and $\eta=\sigma\sin\beta$. Then for small $\sigma$ and assuming that $1-\tau = \cO(\sigma)$ we get
\begin{align*}
F_{\mu(\nu)} &= U_{-\eta}(U_{\eps}(t+\alpha)-\alpha+\tau) + \cO(\sigma^2)\\
            &= t + \tau + \eps Z(t+\alpha) - \eta Z(t+\tau) + \cO(\sigma^2)\\
            &= t + \tau + \eps Z(t+\alpha) - \eta Z(t) + \cO(\sigma^2).
\end{align*}
The Zeitgeber $\zgbr$ has the following Fourier series $\zgbr(t) = c_0 + c_1 \sin(2\pi t) +\sum_{k>1} c_k \sin(2\pi(kt+\gamma_k)$. After a near identity transformation followed by a time shift we obtain that $F_{\mu(\nu)}$ is equivalent to
\begin{equation*}
G_{\mu(\nu)}(t) = t + \tau + (\eps-\eta)c_0 + \eps c_1 \sin(2\pi(t+\alpha)) - \eta c_1 \sin(2\pi t) + \cO(\sigma^2)
\end{equation*}
It almost immediately follows that the boundaries of the main tongue of $G$ and thus of $F$ are as stated in the lemma.
\end{proof}

\begin{proof}[Proof of proposition \ref{pro:fixpos}]
The number of fixed points may change if the number of solution branches of equation \eqref{eq:tster}: $U_{\eps}(t+\alpha)-\alpha+\tau = U_{\eta}(t+1)$ changes. This is equivalent to a changing number of extrema of $\tau = U_{\eta}(t) - U_{\eps}(t+\alpha) + \alpha + 1$. Using $U_{\eps}(t) = t + \eps \zgbr(t)$ we get $\tau = \eta\zgbr(t) -\eps\zgbr(t+\alpha) + \alpha +1$. Thus the number of extrema of $\tau$ changes at parameter values for which
\begin{equation*}
\left\{\begin{array}{r}
\eta\zgbr'(t) -\eps\zgbr'(t+\alpha) = 0\\
\eta\zgbr''(t) -\eps\zgbr''(t+\alpha) = 0
\end{array}\right.
.
\end{equation*}
This equation has trivial solutions for $\eps$ and $\eta$ only if $\zgbr'(t) \cdot \zgbr''(t+\alpha) - \zgbr'(t+\alpha) \cdot \zgbr''(t) \neq 0$. We rewrite this as
\begin{equation*}
\frac{\zgbr'(t)}{\zgbr''(t)} \neq \frac{\zgbr'(t+\alpha)}{\zgbr''(t+\alpha)}.
\end{equation*}
This inequality holds if $h(t)=\frac{\zgbr'(t)}{\zgbr''(t)}$ is injective. Therefore a sufficient condition is that $h'$ does not change sign. From
\begin{equation*}
h'(t) = \frac{\zgbr''(t)^2 - \zgbr'(t) \cdot \zgbr'''(t)}{\zgbr''(t)^2}
\end{equation*}
the result follows.
\end{proof}



\begin{thebibliography}{99}

\commentaar{to be weeded out}



\bibitem{arn83} V.I. Arnol'd, ``Geometrical Methods in the Theory of Ordinary Differential Equations'', Springer-Verlag, New York, 1983.

\bibitem{asc65} J. Aschoff, Response curves in circadian periodicity, Edited by J\"urgen Aschoff, Circadian Clocks: Proceedings of the Feldafing Summer School, Amsterdam: North-Holland Publishing Company, 1965, pp 95-111.

\bibitem{asc66} J. Aschoff, The phase-angle difference in circadian periodicity, Circadian Clocks: Proceedings of the Feldafing Summer School, (1965) pp 262-276.

\bibitem{asc81} J. Aschoff, Freerunning and entrained circadian rhythms, in Handbook of Behavioral Neurobiology, Vol. 4: Biological Rhythms, J. Aschoff (ed), Plenum Press, New York (1981) pp 81-93.

\bibitem{ap78} J. Aschoff, H. Pohl, Phase relations between a circadian rhythm and its zeitgeber within the range of entrainment, Naturwissenschaften 23 (1978), pp 80-84.



\bibitem{bbhd} D.G.M. Beersma, B.A.D. van Bunnik, R.A. Hut, and S. Daan, Emergence of circadian and photoperiodic system level properties from interactions among pacemaker cells, \href{http://jbr.sagepub.com/content/23/4/362.abstract}{Journal of Biological Rhythms} \textbf{23} (2008) pp 362-373.


\bibitem{blmch} S. Boccaletti, V. Latora, Y. Moreno, M. Chavez, and D.-U. Hwang, Complex networks: Structure and Dynamics, Physics Reports \textbf{424} (2006) pp 175-308.





\bibitem{bst} H.W. Broer, C. Sim\'o and J.C. Tatjer, Towards global models near homoclinic tangencies of dissipative diffeomorphisms, Nonlinearity \textbf{11} (1998), pp 667-770.

\bibitem{bt} H.W. Broer, F. Takens, ``Dynamical Systems and chaos'', Epsilon Uitgaven 64, Utrecht, 2009.







\bibitem{da01} S. Daan, J. Aschoff, The entrainment of circadian systems, In Handbook of Behavioral Neurobiology, Volume 12: Circadian Clocks, eds. Joseph S. Takahashi, Fred W. Turek, and Robert Y. Moore, New York: Kluwer Academic/Plenum Publishers, pp. 7-43, 2001.

\bibitem{pd76} S. Daan, C.S. Pittendrigh, A functional analysis of circadian pacemakers in nocturnal rodents, II. Variability of phase response curves, \href{http://dx.doi.org/10.1007/BF01417857}{Journal of Comparative Physiology A} \textbf{106} (1976) pp 253-266.

\bibitem{pd4} S. Daan, C.S. Pittendrigh, A functional analysis of circadian pacemakers in nocturnal rodents, IV. Entrainment: Pacemaker as clock, \href{http://dx.doi.org/10.1007/BF01417859}{Journal of Comparative Physiology A} \textbf{106} (1976) pp 291-331.

\bibitem{dev} R.L. Devaney, ``An Introduction to Chaotic Dynamical Systems'', Benjamin-Cumming, 1986.



\bibitem{ert} J.T. Enright, ``The Timing of Sleep and Wakefulness: On the Substructure and Dynamics of the Circadian Pacemakers Underlying the Wake-Sleep Cycle'', Springer-Verlag, Berlin, 1980.


\bibitem{gla} L. Glass, Synchronization and rhythmic processes in physiology, Nature, \textbf{410} (2001) pp 277-284.





\bibitem{ghrkh} A. Granada, R.M. Hennig, B. Ronacher, A. Kramer, and H. Herzel, Phase response curves: Elucidating the dynamics of coupled oscillators, \href{http://dx.doi.org/10.1016/S0076-6879(08)03801-9}{Methods in Enzymology} \textbf{454} (2009), Chapter 1.

\bibitem{hh} M.H. Hastings, E.D. Herzog , Clock genes, oscillators, and cellular networks in the suprachiasmatic nucleus, \href{http://jbr.sagepub.com/content/19/5/400.abstract}{Journal of Biological Rhythms}, \textbf{19} (2004) pp 400-413.

\bibitem{hhh} K. Honma, S. Honma, T. Hiroshige, Response curve, free-running period, and activity time in circadian locomotor rhythm of rats, \href{http://dx.doi.org/10.2170/jjphysiol.35.643}{Japanese Journal of Physiology} \textbf{35} (1985) pp 643-658.

\bibitem{hnsh} S. Honma, W. Nakamura, T. Shirakawa, and K. Honma, Diversity in circadian periods of single neurons in rat suprachiasmatic nucleus depends on nuclear structure and intrinsic period, \href{http://dx.doi.org/10.1016/j.neulet.2004.01.022}{Neuroscience Letters} \textbf{358} (2004) pp 173-176.

\bibitem{hsknh} S. Honma, T. Shirakawa, Y. Katsuno, M. Namihira, and K. Honma, Circadian periods of single suprachiasmatic neurons in rats, \href{http://dx.doi.org/10.1016/S0304-3940(98)00464-9}{Neuroscience Letters} \textbf{250} (1998) pp 157-160.


\bibitem{hod} R.A. Hut, B.E.H. van Oort, S. Daan, Natural entrainment without dawn and dusk: The case of the european ground squirrel (Spermophilus citellus), \href{http://jbr.sagepub.com/content/14/4/290.abstract}{Journal of Biological Rhythms} \textbf{14} (1999) pp 290-299.


\bibitem{j05} C.H. Johnson, Testing the adaptive value of circadian systems, Methods in Enzymology \textbf{393} (2005) pp 818-837

\bibitem{kh} A. Katok, B. Hasselblatt, ``Introduction to the Modern Theory of Dynamical Systems'', Cambridge University Press, 1997.



\bibitem{kuz} Y.A. Kuznetsov, ``Elements of Applied Bifurcation Theory'', Applied Mathematical Sciences vol 112, 3rd ed, Springer, Berlin, 2004.



\bibitem{mbrmgr} M. Merrow, C. Boesl, J. Ricken, M. Messerschmitt, M. Goedel, T. Roenneberg, Entrainment of the Neurospora circadian clock, \textit{Chronobiology International}, 23 (2006) pp 71-80.



\bibitem{moo} R.Y. Moore, Retinohypothalamic projection in mammals: a comparative study, \href{http://dx.doi.org/10.1016/0006-8993(73)90431-9}{Brain Research} \textbf{49} (1973) pp 403-409.

\bibitem{me} R.Y. Moore, V.B. Eichler, Loss of a circadian corticosterone rhythm following suprachiasmatic lesions in the rat, \href{http://dx.doi.org/10.1016/0006-8993(72)90054-6}{Brain Research} \textbf{42} (1972) pp 201-206.



\bibitem{okz} G.V. Osipov, J. Kurths, and C. Zhou, Synchronization in Oscillatory Networks, Springer-Verlag, 2007.





\bibitem{rfdm} M.R. Ralph, R.G. Foster, F.C. Davis, M. Menaker, Transplanted suprachiasmatic nucleus determines circadian period, \href{http://dx.doi.org/10.1126/science.2305266}{Science} \textbf{247} (1990) pp 975-978.

\bibitem{rm01} T. Roenneberg and M. Merrow, The role of feedbacks in circadian systems, in Zeitgebers, Entrainment, and Masking of the Circadian System, K. Honma and S. Honma, eds., pp. 113-129, Hokkaido Univ. Press, Sapporo, 2001

\bibitem{rm03} T. Roenneberg, M. Merrow, The network of time: Understanding the molecular circadian system. \href{http://dx.doi.org/10.1016/S0960-9822(03)00124-6}{Current Biology} \textbf{13} (2003), pp R198-R207.


\bibitem{shkoh} T. Shirakawa, S. Honma, Y. Katsuno, H. Oguchi, and K. Honma, Synchronization of circadian firing rhythms in cultured rat suprachiasmatic neurons, \href{http://dx.doi.org/10.1046/j.1460-9568.2000.00170.x}{European Journal of Neuroscience} \textbf{12} (2000) pp 2833-2838.

\bibitem{str} S.H. Strogatz, SYNC: The Emerging Science of Spontaneous Order, Hyperion, 2003.

\bibitem{uto} S. Usui, Y. Takahashi, T. Okazaki, Range of entrainment of rat circadian rhythms to sinusoidal light-intensity cycles, \href{http://ajpregu.physiology.org/cgi/content/abstract/278/5/R1148}{Am. J. Physiol. Regulatory Integrative Comparative Physiology} \textbf{278} (2000), pp R1148-1156.

\bibitem{spo} K. Spoelstra, Dawn and Dusk: Behavioural and molecular complexity in circadian entrainment, Chapter 4, PhD Thesis, University of Groningen, 2005.

\bibitem{wlmr} D.K. Welsh, D.E. Logothetis, M. Meister, and S.M. Reppert, Individual neurons dissociated from rat suprachiasmatic nucleus express independently phased circadian firing rhythms, \href{http://dx.doi.org/10.1016/0896-6273(95)90214-7}{Neuron} \textbf{14} (1995) pp 697-706.

\bibitem{wahh} A.B. Webb, N. Angelo, J.E. Huettner, E.D. Herzog, "Intrinsic, nondeterministic circadian rhythm generation in identified mammalian neurons", \href{http://dx.doi.org/10.1073/pnas.0902768106}{PNAS} \textbf{106} (2009) pp 16493-16498.

\bibitem{wev66} R. Wever, The duration of re-entrainment of circadian rhythms after phase shifts of the zeitgeber, \href{http://dx.doi.org/10.1016/0022-5193(66)90016-6}{Journal of Theoretical Biology} \textbf{13} (1966) pp 187-201.

\bibitem{wev73} R. Wever, Internal phase-angle differences in human circadian rhythms: causes for changes and problems of determinations, International Journal of Chronobiology \textbf{1} (1973) pp 371-390.

\bibitem{win67} A.T. Winfree, Biological rhythms and the behavior of populations of coupled oscillators, \href{http://dx.doi.org/10.1016/0022-5193(67)90051-3}{Journal of Theoretical Biology} \textbf{16} (1967) pp 15-42.



\end{thebibliography}
\end{document}